\newcommand{\B}{B}
\newcommand{\W}{W}
\newcommand{\Tbl}{\mathbf{T}}
\newcommand{\bl}{b}
\newcommand{\kbl}{k}
\DeclareMathOperator{\arctantwo}{arctan2}
\DeclareMathOperator{\sinhc}{sinch}
\DeclareMathOperator{\sign}{sign}
\DeclareMathOperator{\si}{\mathbf s}
\DeclareMathOperator{\co}{\mathbf c}
\DeclareMathOperator{\ta}{\mathbf t}
\newcommand{\colora}[1]{\textcolor{black}{x^{#1}}}
\newcommand{\colorb}[1]{\textcolor{RedOrange}{x^{#1}}}
\newcommand{\colorc}[1]{\textcolor{MidnightBlue}{x^{#1}}}
\newcommand{\colord}[1]{\textcolor{Fuchsia}{x^{#1}}}
\newcommand{\colore}[1]{\textcolor{black}{x^{#1}}}
\newcommand{\cev}[1]{\reflectbox{\ensuremath{\vec{\reflectbox{\ensuremath{#1}}}}}}
\newcommand{\GR}[2][{}]{\ensuremath{\mathbb{R}^{{#1}}_{#2}}}
\newcommand{\e}{\mathbf {e}_}            
\newcommand{\Refl}{\mathbf {R}}            
\newcommand{\Spin}[1]{\ensuremath{\text{Spin}({#1})}}
\newcommand{\Pin}[1]{\ensuremath{\text{Pin}({#1})}}
\newcommand{\SU}[1]{\ensuremath{\text{SU}({#1})}}
\newcommand{\Eu}[1]{\ensuremath{\text{E}({#1})}}
\newcommand{\SE}[1]{\ensuremath{\text{SE}({#1})}}
\newcommand{\Conformal}[1]{\ensuremath{\text{Conf}({#1})}}
\newcommand{\SConformal}[1]{\ensuremath{\text{SConf}({#1})}}
\newcommand{\Or}[1]{\ensuremath{\text{O}({#1})}}
\newcommand{\SO}[1]{\ensuremath{\text{SO}({#1})}}
\newcommand{\spin}[1]{\ensuremath{\mathfrak{spin}({#1})}}
\DeclareMathOperator\Ln{Ln}
\DeclarePairedDelimiter\floor{\lfloor}{\rfloor}
\DeclarePairedDelimiter\ceil{\lceil}{\rceil}
\DeclareMathOperator{\arccosh}{arccosh}
\crefname{algocf}{algorithm}{algorithms}
\Crefname{algocf}{Algorithm}{Algorithms}
\crefname{figure}{fig.}{figs.}
\Crefname{figure}{Figure}{Figures}
\newtheorem{theorem}{Theorem}
\theoremstyle{definition}
\newtheorem{corollary}{Corollary}[theorem]
\theoremstyle{definition}
\newtheorem{example}{Example}[section]
\begin{document}

\title{Graded Symmetry Groups: Plane and Simple}

\author{Martin Roelfs}
\authornote{Both authors contributed equally to the paper}
\orcid{0000-0002-8646-7693}
\affiliation{%
  \department{Department of Physics}
  \institution{KU Leuven}
  \city{Kortrijk}
  \country{Belgium}
}
\email{martin.roelfs@kuleuven.be}
\author{Steven De Keninck}
\authornotemark[1]
\affiliation{%
  \department{Informatics Institute}
  \institution{University of Amsterdam}
  \city{Amsterdam}
  \country{The Netherlands}
}
\email{steven@enki.ws}

\begin{abstract}
    The symmetries described by Pin groups are the result of combining a finite number of discrete reflections in (hyper)planes.
    The current work shows how an analysis using geometric algebra provides a picture complementary to that of the classic matrix Lie algebra approach, while retaining information about the number of reflections in a given transformation. 
    This imposes a graded structure on Lie groups, not evident in their matrix representation.
    By embracing this graded structure, the invariant decomposition theorem was proven: any composition of $k$ linearly independent reflections can be decomposed into $\ceil{k/2}$ commuting factors, each of which is the product of at most two reflections.
    This generalizes a conjecture by M. Riesz, and has e.g. the Mozzi-Chasles' theorem as its 3D Euclidean special case.
    To demonstrate its utility, we briefly discuss various examples such as Lorentz transformations, Wigner rotations, and screw transformations.
    The invariant decomposition also directly leads to closed form formulas for the exponential and logarithmic function for all Spin groups, and identifies element of geometry such as planes, lines, points, as the invariants of $k$-reflections.
    We conclude by presenting novel matrix/vector representations for geometric algebras \GR{pqr}, and use this in \Eu{3} to illustrate the relationship with the classic covariant, contravariant and adjoint representations for the transformation of points, planes and lines.
\end{abstract}

\keywords{Lie groups, Lie algebras, Invariant decomposition, Pseudo-Euclidean group, Conformal group, closed form exponential and logarithmic formulas, Wigner rotation, Mozzi-Chasles' theorem, Baker-Campbell-Hausdorff formula, Lorentz group, Geometric gauge}
\maketitle
\renewcommand{\shortauthors}{Roelfs and De Keninck}

\let\svthefootnote\thefootnote
\let\thefootnote\relax\footnotetext{
Authors’ addresses: Martin Roelfs, Department of Physics, KU Leuven, Kortrijk,
Belgium, martin.roelfs@kuleuven.be; Steven De Keninck, Informatics Institute,
University of Amsterdam, Amsterdam, The Netherlands, steven@enki.ws.
}
\let\thefootnote\svthefootnote

\section{Introduction}

Central to this paper is the generalisation of a conjecture by M. Riesz \cite{Riesz1993}, stating that a bivector of an n-dimensional geometric algebra
\GR{pq} can always be decomposed into at most $\floor{\frac n 2}$ simple commuting orthogonal bivectors. 
We extend this conjecture to the wider class of algebras \GR{pqr}, which includes $r$ null basis vectors, and consider the group of all reflections therein, \Pin{p,q,r}.
The resulting theorem then states:
\begin{theorem}[invariant decomposition]\label{th:invariantdecomposition}
A product of $k$ reflections $U = u_1 u_2 \cdots u_k$ can be decomposed into exactly $\ceil{\frac k 2}$ commuting factors. These are $\floor{\frac k 2}$ products of two reflections, and, for odd $k$, one extra reflection. These factors are called \emph{simple}.
\end{theorem}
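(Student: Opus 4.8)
\medskip
\noindent\textbf{Proof proposal.}
Since $\floor{k/2}+(k \bmod 2)=\ceil{k/2}$, it suffices to produce the commuting simple pieces one family at a time. The plan is to pass from the versor $U=u_1u_2\cdots u_k$ to the isometry $\underline U\colon v\mapsto \pm UvU^{-1}$ it induces on the vectors $\GR[1]{pqr}$, to split that isometry into elementary motions supported on mutually orthogonal planes, and to lift each motion back to a simple versor. I would first record that, the reflections $u_i$ being linearly independent, $\underline U$ fixes a subspace of codimension exactly $k$, so that, writing $\W$ for the $k$-dimensional span of $u_1,\dots,u_k$, we may view $U$ as an element of $\mathcal{G}(\W)$ and $\underline U$ as an isometry of $\W$ with $\det_{\W}\underline U=(-1)^{k}$.

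The heart of the argument is a canonical decomposition $\W=\W_1\oplus\cdots\oplus\W_m$ --- with one extra summand $\W_{m+1}$ when $k$ is odd --- into $\underline U$-invariant, mutually orthogonal subspaces, where $\dim\W_i=2$ for $i\le m=\floor{k/2}$ while $\dim\W_{m+1}=1$ with $\underline U|_{\W_{m+1}}=-\mathrm{id}$. This is the real block form of $\underline U|_{\W}$: two-dimensional blocks for conjugate pairs of non-real eigenvalues (ordinary rotations), for reciprocal pairs of real eigenvalues (boosts), and for the non-semisimple null pieces (parabolic rotations), plus the $(-1)$-line forced by $\det_{\W}\underline U=(-1)^k$ in the odd case. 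For each $i$ let $U_i\in\mathcal{G}(\W_i)$ be a versor inducing $\underline U|_{\W_i}$ and the identity on $\W_i^{\perp}$; a rotation, boost or parabolic rotation of a plane is a product of two reflections lying in that plane and $-\mathrm{id}$ on a line is one reflection, so each $U_i$ is simple. Because the $\W_i$ are mutually orthogonal and $\underline U$ preserves each, $U_1\cdots U_m(U_{m+1})$ induces $\underline U$ and hence differs from $U$ by at most a sign, absorbed into a single factor. Finally the $U_i$ commute pairwise, since elements supported on orthogonal subspaces commute whenever one of the two is even --- and $U_1,\dots,U_m$ are even while $U_{m+1}$ is a vector. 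This delivers the $\ceil{k/2}$ commuting simple factors.

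For even $k$ the decomposition can be made fully explicit, which is also what yields the closed-form exponential and logarithm promised in the abstract: when $\langle U\rangle_0\ne 0$ one sets $F:=\langle U\rangle_2\,\langle U\rangle_0^{-1}\in\GR[2]{pqr}$, recognizes the $\W_i$ as the two-dimensional invariant subspaces of the skew map $v\mapsto v\cdot F$ --- equivalently the spectral subspaces of its square, whose eigenvalues are the roots of a polynomial in the scalars $\langle F^{2j}\rangle_0$ --- so that $F=f_1+\cdots+f_m$ is a sum of commuting \emph{simple} bivectors (the present theorem specialised to the Lie algebra, i.e.\ the generalisation of Riesz's conjecture), and recovers $U_i$ as the product of the two reflections in the plane of $f_i$ fixed by $f_i$ and $\langle U\rangle_0$, checking $U=U_1\cdots U_m$ by matching grade-$0$ and grade-$2$ parts. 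The cases $\langle U\rangle_0=0$, and $k$ odd, reduce to this one by first splitting off the $(-1)$-line of $\underline U$, which exists precisely because $\det_{\W}\underline U=(-1)^k=-1$.

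The step I expect to be the genuine obstacle is the canonical decomposition in the \emph{degenerate} setting ($r>0$, and already indefinite Euclidean and Lorentzian signature), where $\underline U$ --- equivalently the map $v\mapsto v\cdot F$ --- need not be semisimple. One then has to control the parabolic (nilpotent) blocks and show that for a product of $k$ non-null reflections no indecomposable invariant block larger than two dimensions can occur; one must allow the pieces $\W_i$ to carry a degenerate metric and replace ``orthogonal complement'' by an invariant complement that still leaves the peeled-off remainder an honest product of reflections; and one has to verify that a versor realising a parabolic block on a degenerate plane is genuinely a product of two non-null reflections within it, and to pin down the global sign. With these degenerate subtleties dispatched the statement follows by induction on $k$ --- equivalently on the number of distinct eigenvalues of $\underline U$ --- peeling off one simple factor at each step, the remaining $k-2$ (or $k-1$) reflections lying in a complementary invariant subspace.
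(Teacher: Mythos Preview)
Your outline is coherent but takes a genuinely different route from the paper. You work \emph{geometrically}: pass to the induced isometry $\underline U$ on the vector space, find its real block form (invariant $2$-planes plus, for odd $k$, a $(-1)$-line), and lift each block back to a simple versor supported on that plane. The paper works \emph{algebraically} inside the Clifford algebra and never touches $\underline U$: it first shows that every $2k$-reflection is generated by a bivector $B$, then decomposes $B$ directly by forming the wedge powers $W_m=\tfrac{1}{m!}\langle B^m\rangle_{2m}$, reading off the $\lambda_i=b_i^2$ as roots of the scalar polynomial $\sum_m\langle W_m^2\rangle_0(-\lambda)^{k-m}=0$, and writing each $b_i$ as an explicit rational expression in $\lambda_i$ and the $W_m$. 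Commutativity of the resulting $b_i$ is checked from $W_i\times W_j=0$, which in turn follows from $B^i\times B^j=0$ --- no invariant subspaces, no orthogonal complements, no semisimplicity assumption. For odd $k$ the extra reflection is extracted as $r=\overline{\langle P\rangle_1}$ rather than via a $(-1)$-eigenline of $\underline P$.

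The practical difference is exactly the degenerate case you flag as the obstacle: because the paper's argument never asks for an invariant complement or a bound on Jordan-block size, it runs uniformly for $r>0$; the only concessions are that the $\lambda_i$ may come out complex and are assumed distinct. Your approach buys a cleaner connection to the classical spectral theory of $O(p,q)$, but the parabolic and nilpotent obstructions you list are real and not obviously cheaper to dispatch than the paper's direct computation. Your third paragraph --- decomposing $F=\langle U\rangle_2\langle U\rangle_0^{-1}$ via the spectrum of $v\mapsto v\cdot F$ --- is essentially the paper's \emph{tangent decomposition}, which is indeed how the factorisation of a given rotor (as opposed to a given bivector) is carried out there.
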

\noindent
In the three dimensional Euclidean group this says that every $4$-reflection can be decomposed into two commuting 2-reflections, better known as the Mozzi-Chasles' theorem:

\begin{displayquote}``Every three dimensional rigid body motion can be decomposed as a translation along a line followed or preceded by a rotation around the same line.''
\end{displayquote}
Our generalisation shows that this is - in contrast to popular belief - not the simplest example, and a similar statement can be made for
the $3$-reflections in the two dimensional Euclidean group.

\begin{displayquote}
``Every two dimensional $3$-reflection can be decomposed as a translation along a line followed or preceded by a reflection in the same line.''
\end{displayquote}
Key to understanding \cref{th:invariantdecomposition} and its proof is a graded and very geometric perspective on symmetry groups, which we take some time to explain in \cref{E2}. 
Following Hamilton, we build all isometric transformations by composing reflections, while devoting extra attention to the graded structure this imposes. 
The factors of any isometry are reflections and \emph{bireflections}: a pair of reflections that form either a rotation, translation, or hyperbolic rotation (boost). 
We will not only prove \cref{th:invariantdecomposition}, but also provide an analytical solution for the decomposition of $k$ reflections into $\ceil{\frac k 2}$ commuting simple orthogonal factors.
Having such a geometrically inspired decomposition makes many algebraic operations, such as computing exponentials and logarithms, much easier. 

Paul Dirac famously remarked that his research work was done in pictures, and that he often thought projective geometry the most useful, but 
\begin{displayquote}
    ``When I came to publish the results I suppressed the projective geometry as the results could be expressed more concisely in analytic form.'' -- P.~A.~M.~Dirac~\cite{Dirac} 
\end{displayquote}
But the pictures provide additional geometric insights which are not easily gained purely from algebra.
For example, while the Clifford-Lipschitz group and the twisted Clifford-Lipschitz group might be algebraically isomorphic \cite{Vaz:2016qyw}, the geometrical interpretation makes it clear that the conjugation law of the twisted Clifford-Lipschitz group has to be used to apply reflections, see \cref{sec:conjugation}.
The emphasis on geometry is motivated further by recent advances in computer graphics \cite{10.1145/3305366.3328099,GunnThesis}, which demonstrate that vectors can be identified with (hyper)planes instead of points, an idea dating back to Michel Chasles \cite{chasles1875}, but not often considered.
A pictorial approach will help to underscore the importance of this insight.

In \cref{sec:geometric_algebra} we introduce Clifford algebras, whose graded structure and concise expression for reflections makes them the ideal algebraic framework for formalizing graded symmetry groups. In \cref{eigen} we establish blades as the natural choice to represent the primitive elements of geometry such as points, lines, spheres, etc. We then turn our attention to Spin Lie algebras, and their identification as bivector algebras in \cref{liealgebra}. 
A proof and novel algorithm for the \emph{invariant decomposition} is presented in \cref{sec:invariant_decomposition}.
In \cref{sec:logarithm,sec:exponential,sec:factorization} we use this invariant decomposition to present closed form solutions for exponentials, factorization of group elements, and logarithms, respectively.
We conclude in \cref{matrix} with a novel method to construct efficient matrix-matrix and matrix-vector representations of graded symmetry groups, and show how these contain the classic covariant, contravariant and adjoint matrix representations.
All the analytical results of this work were also condensed into a cheat sheet, which can be found on the last page of this manuscript.

\clearpage

\section{Geometric Intuition}\label{E2}

\subsection{The Euclidean Group \Eu{2}}

To build an intuition that will carry over from the Orthogonal group all the way to the Conformal group, we first study the distance preserving transformations of the plane. We build on Hamilton's observation that rotations and translations can be constructed by composing reflections, and use that idea as our guiding principle. \Cref{figure_refl} illustrates how the reflection of a shape in a line (left) is the basic building block from which both the rotations (middle) and translations (right) can be constructed through composition. The resulting \emph{bireflection} will transform the shape with twice the angle or distance between the lines respectively.      
\begin{figure}[t]
  \includegraphics[width=1.0\columnwidth]{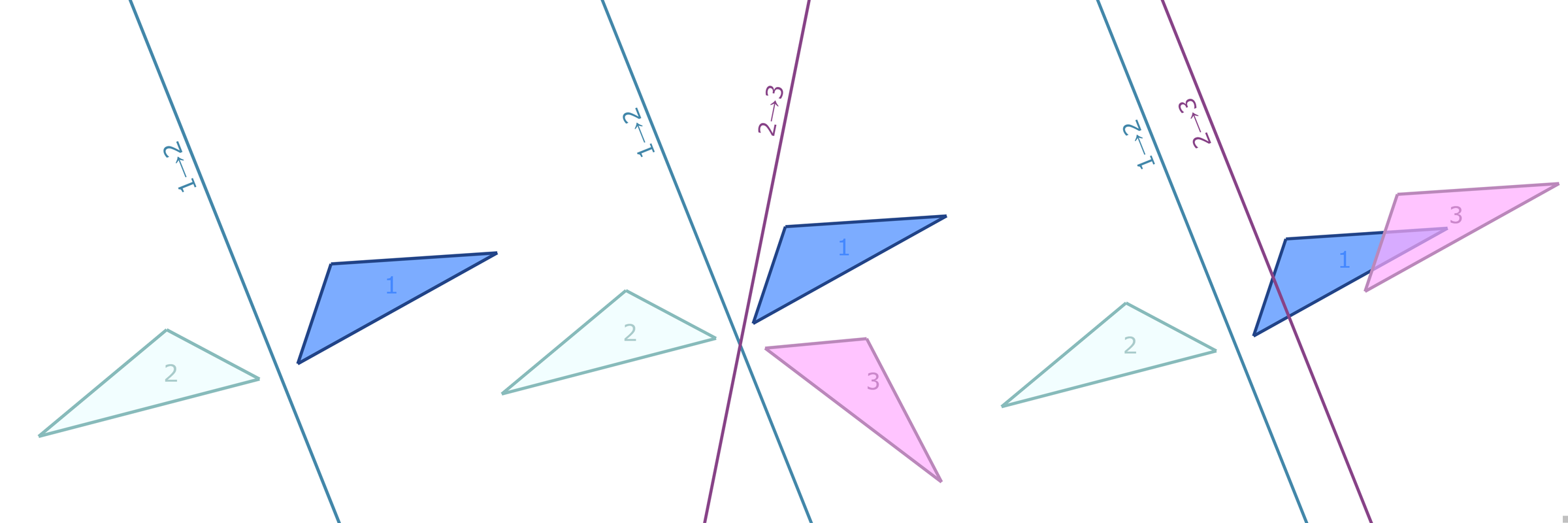}
  \captionof{figure}{Reflections in 2D. From left to right, a single line reflection, two intersecting line reflections forming a rotation, and two parallel line reflections forming a translation.}
  \label{figure_refl}
\end{figure}

\Cref{figure_freedom} illustrates how, when creating a bireflection, only the intersection point and the relative separation between the reflections matters. Indeed, the same rotation can be created using any of the configurations in \cref{figure_freedom}.
This is important, as this \emph{gauge} degree of freedom allows us to select a favorable factorization of any bireflection.
Because of the associativity of reflections, this means that $k$ reflections have $k-1$ gauge degrees of freedom.
\begin{figure}[h]
  \noindent
  \includegraphics[width=1.0\columnwidth]{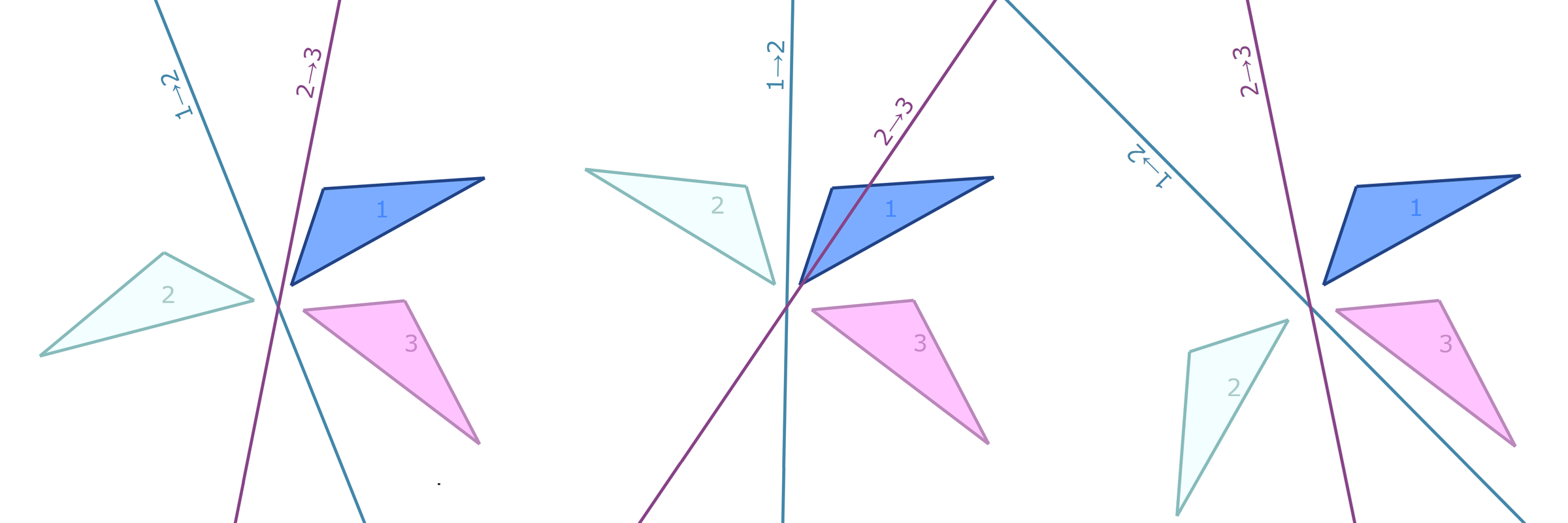}
  \centering
  \caption{Three different sets of line reflections that intersect in the same point and at the same angle create the same rotation around that point.}
  \label{figure_freedom}
\end{figure}
We also note at this point how this approach unifies the treatment of rotations and translations, showing clearly not only how small changes in one of the elementary reflections creates the continuous rotational and translational symmetries, but that indeed rotations and translations are part of the same continuous manifold. 
This relation is illustrated further in \cref{figure_infinity}, which shows how translations can be understood as rotations around infinite or vanishing points. 
\begin{figure}[ht]
  \noindent
  \centering
  \includegraphics[width=1.0\columnwidth]{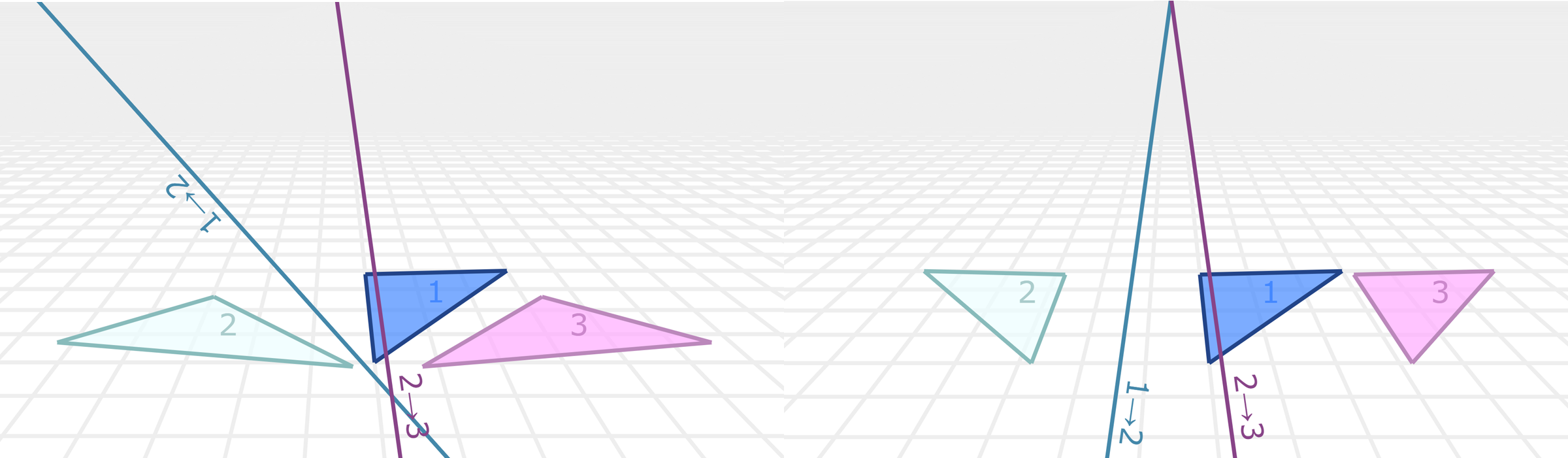}
  \caption{When viewed in 3D, translations (right) can be seen as rotations (left) around infinite points.}
  \label{figure_infinity}
\end{figure}

\subsection{Cartan–Dieudonné}\label{Cartan–Dieudonné}
Having constructed our translations and rotations as bireflections, we ask which isometries can be created by composing a larger number of reflections. For our 2D example, we find in \cref{figure_isometries} how in the plane, the composition of three reflections still produces a new isometry, the glide reflection.
\begin{figure}[h]
  \noindent
  \centering
  \includegraphics[width=1.0\columnwidth]{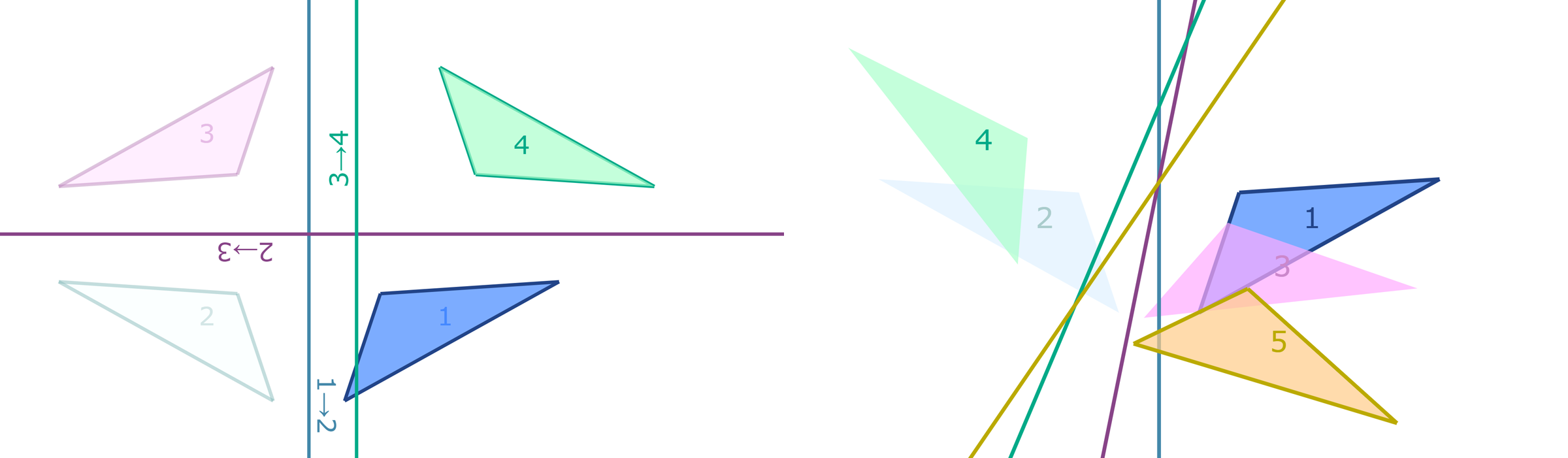}
  \caption{(left) Three line reflections produce a glide reflection. \Cref{ex:glide_reflection} shows that in \Eu{2} the seemingly distinct rotoreflections are also just glide reflections. (right) Four line reflections produce no new isometry, but instead again a rotation/translation.}
  \label{figure_isometries}
\end{figure}
But when four reflections are combined, we can factorize these  such that two adjacent reflections are identical. Two identical reflections leave the entire plane unchanged, and as a consequence every composition of four reflections in a plane can be simplified down to two reflections, as demonstrated in \cref{figure_cartan}.
\begin{figure}[tbh]
  \noindent
  \centering
  \includegraphics[width=1.0\columnwidth]{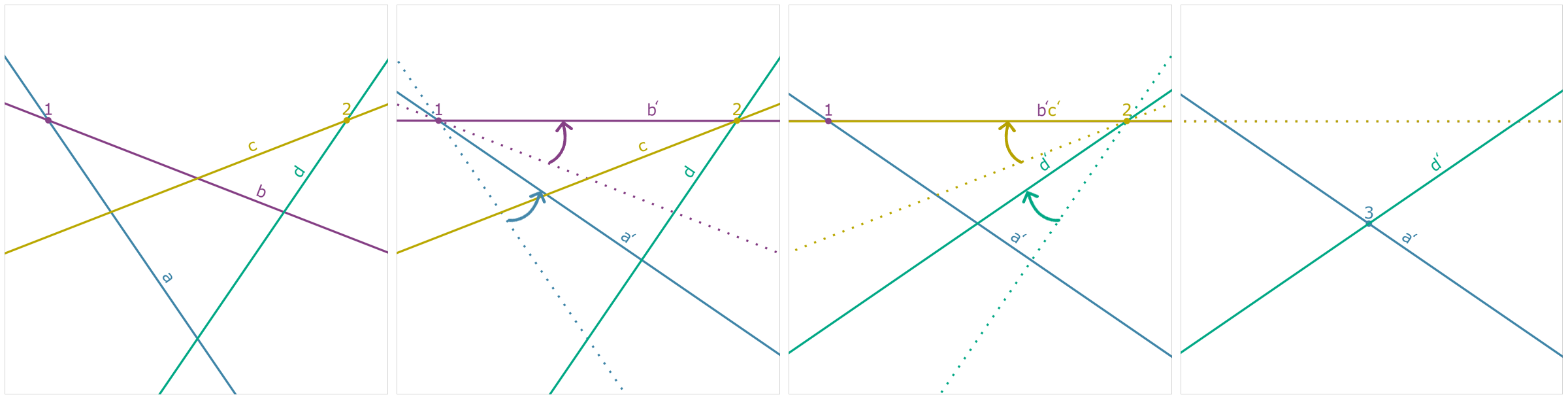}
  \caption{Given four line reflections $abcd$, the gauge symmetry allows the pairs $ab$ and $cd$ to be rotated into $a'b'$ and $c' d'$, until $b'$ and $c'$ are incident and vanish from the expression. The result is a bireflection.}
  \label{figure_cartan}
\end{figure} 

\subsection{Graded Symmetry Groups}\label{Graded Groups}
Before we continue our geometric view, it makes sense at this point to introduce the usual group theoretic notation. A group $\mathcal G$ is a set of elements and a single binary operation, written here with juxtaposition, that satisfies the following requirements:
\begin{enumerate}
\itemsep0em 
    \item For any ordered pair $a,b \in \mathcal G : ab \in \mathcal G$ (closure)
    \item The associative law holds. $\forall a,b,c \in \mathcal G : (ab)c = abc = a(bc)$
    \item There is an identity element $1 \in \mathcal G$, such that \\
    $\forall a \in \mathcal G: 1a = a = a1$ 
    \item Each non-degenerate $a \in \mathcal G$ has an inverse \\
    $a^{-1} \in \mathcal G : aa^{-1} = 1 = a^{-1}a$
\end{enumerate}
Because compositions of reflections satisfy all these conditions, 
they form a group.
In standard notation, the reflection group of oriented reflections through the origin, in a space with $p$ positive and $q$ negative dimensions, is denoted \Pin{p,q}. However, this notation can be extended to also include $r$ null dimensions, to \Pin{p,q,r}. These null dimensions can be leveraged to describe reflections not through the origin.

In an $n$ dimensional orthogonal space, any isometry can be written as a product of at most $n$ reflections. This is the famous Cartan-Dieudonné theorem:
\begin{theorem}[Cartan-Dieudonné]\label{th:cd}
Every orthogonal transformation of an $n$-dimensional space can be decomposed into at most $n$ reflections in hyperplanes.
\end{theorem}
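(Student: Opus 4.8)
\noindent
The plan is to prove \cref{th:cd} by induction on $d:=\dim\big((T-1)V\big)$, the codimension of the fixed space $\ker(T-1)$, where $V$ denotes the space of vectors whose bilinear form $\langle\,\cdot\,,\,\cdot\,\rangle$ we take non-degenerate for now (the degenerate case $r>0$ is addressed at the end); since $d\le n$, bounding the number of reflections by $d$ gives the theorem. The case $d=0$ is $T=1$, the empty product. For the step, assume $T\neq 1$ and --- the main branch --- that $(T-1)V$ is \emph{not} totally isotropic, so that some vector $x$ has $w:=x-Tx$ anisotropic. Orthogonality of $T$ gives $\langle x,x\rangle=\langle Tx,Tx\rangle$, hence $\langle w,w\rangle=2\langle x,w\rangle\neq 0$, so the reflection $\refl_w\colon v\mapsto v-2\tfrac{\langle v,w\rangle}{\langle w,w\rangle}w$ in the hyperplane $w^{\perp}$ satisfies $\refl_w x=x-w=Tx$, i.e.\ $\refl_w(Tx)=x$, so $\refl_w T$ fixes $x$. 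It also fixes $\ker(T-1)$ pointwise, because for $Tu=u$ one has $\langle w,u\rangle=\langle x,u\rangle-\langle Tx,Tu\rangle=0$ and therefore $\refl_w u=u$. Since $x\notin\ker(T-1)$, the fixed space of $\refl_w T$ strictly contains that of $T$, so $\dim\big((\refl_w T-1)V\big)\le d-1$; by the inductive hypothesis $\refl_w T$ is a product of at most $d-1$ reflections, whence $T=\refl_w\,(\refl_w T)$ is a product of at most $d\le n$.

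What remains, and the genuine obstacle, is $T\neq 1$ with $P:=(T-1)V$ \emph{totally isotropic}: then no difference $x-Tx$ is anisotropic, and the reduction cannot even be started. Several structural facts should be assembled. The adjoint of $T-1$ with respect to the form is $T^{-1}-1$, and $TP=P$, so both $(T-1)v$ and $(T^{-1}-1)y$ lie in $P$; the identity $\langle(T-1)^2v,y\rangle=\langle(T-1)v,(T^{-1}-1)y\rangle$ together with the total isotropy of $P$ then forces $(T-1)^2=0$, hence $P\subseteq\ker(T-1)=P^{\perp}$, so $\det T=1$ and $d=\dim P\le n/2$. One further checks that this branch is vacuous for $d=1$, hence empty when $n\le 3$ --- it first appears in signatures carrying a totally isotropic $2$-plane, e.g.\ \Or{2,2} --- so the bound $n$ has comfortable slack: it would suffice to express such a $T$ as $\dim P$ reflections plus a bounded surplus. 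Producing that factorization is the delicate part, and is where I expect nearly all the remaining work to lie; the classical device is Scherk's refinement of Cartan--Dieudonné, in which one fixes a hyperbolic basis adapted to $P$ together with a complementary totally isotropic subspace and peels off reflections so that $\dim\big((T-1)V\big)$ drops step by step --- but organising this as a clean induction, with the surplus kept under control, is what needs care.

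Finally, the degenerate case $r>0$. Then reflections can be taken only in non-degenerate hyperplanes, and \Or{V} contains isometries acting nontrivially on the radical that are not products of reflections at all, so \cref{th:cd} must be read for the group \Pin{p,q,r} generated by reflections. Every reflection $\refl_u$ fixes the radical pointwise --- $\langle x,u\rangle=0$ whenever $x$ lies in the radical, so $\refl_u x=x$ --- hence so does every element of \Pin{p,q,r}, and a parallel analysis applies, with the reductions that meet the radical absorbed into the totally isotropic branch above; the bound is again $n=p+q+r$.
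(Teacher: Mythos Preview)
Your main branch is exactly the paper's entire proof: both take a non-fixed vector $v$ (your $x$), form the bisector $w=v-Tv$, and observe that reflecting in $w$ composes with $T$ to something fixing $v$, then induct. The only cosmetic difference is the induction variable --- the paper inducts on the ambient dimension $n$, you on $d=\dim(T-1)V$ --- which buys you the sharper intermediate bound $d$ rather than $n$ whenever the main branch runs all the way down.

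Where you diverge is in rigor. The paper's proof simply writes $u_k=\overline{v-\phi_k(v)}$ and proceeds; it never discusses what happens when that difference is null, so as written it is really only a proof in the definite case. You correctly isolate the totally isotropic branch as the genuine obstacle, supply the right structural facts ($(T-1)^2=0$, $P\subseteq P^\perp$, $d\le n/2$, $\det T=1$), and point at Scherk's refinement as the standard device --- but, as you say yourself, you do not carry it out. So your proposal is more honest and more complete than the paper's proof, yet still has the acknowledged gap in exactly the place you flag. For the record, the paper does not treat the degenerate case $r>0$ either; your remarks there go beyond what the paper proves.

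In short: same idea, you are more careful, and the one branch you leave open is one the paper's proof never even notices.
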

\begin{proof}
The proof follows by induction. When $n=1$, the only isometry different from identity is clearly a point reflection through the origin. Now assume Cartan-Dieudonné holds in $n=k-1$ dimensions, and thus that any isometry in $\phi_{k-1}$ can be written as $\phi_{k-1} = \prod_{i=1}^{k-1} u_i$, where $u_i$ is a reflection. We shall now prove that it also holds for $n=k$ dimensions.

Let $\phi_k$ be an isometry different from identity in $n=k$ dimensions, fixing the origin. Then assume $v \in \mathbb{R}^k$ is a hyperplane such that $\phi_k[v] \neq v$.

Define the bisector $u_k = \overline{v - \phi_k(v)}$.
Reflection in the bisector $u_k$ maps $v$ to the same hyperplane as $\phi_k$. Hence, $\phi_{k-1} = u_k \phi_k$ acts as the identity element on~$v$: $\phi_{k-1}[v] = v$, while $\phi_{k-1}$ is an isometry on the $k-1$ dimensional subspace orthogonal to $v$. But by assumption $\phi_{k-1}$ can be written as a product of reflections: $\phi_{k-1} = \prod_{i=1}^{k-1} u_i$. 
Therefore, the isometry $\phi_k$ can be written as $\phi_k = \prod_{i=1}^k u_i$.
\end{proof}
\noindent
Hence, all isometries are graded by the number of reflections they can be factored in. This leads us to the definition of graded symmetry groups:

\begin{definition}[Graded symmetry group]
    An element $U \in \Pin{p,q,r}$ is said to be of grade $k$ if it can be factored into $k$ linearly independent reflections:
        \[ U = u_1 u_2 \cdots u_k. \]
    Because the product of a $k$-reflection with a linearly independent $l$-reflection is a $(k+l)$-reflection, Pin groups are \emph{graded symmetry groups} \cite[Chapter 3.1]{bourbaki1989algebra}.
\end{definition}
\noindent
As is customary, we write group elements using lowercase roman characters ($a,b,c$), however, we write $k$-reflections using names with $k$ characters. This aligns with our composition operator, which is written using juxtaposition. As an example, in $E(2)$ a rotation might be called $ab$, while a glide reflection might be called $abc$.

\subsection{Conjugation - Group Action}\label{Group Action}
We now turn our attention to the intuition behind the conjugation, i.e. the action of the group on itself. We know how to compose two reflections $a,b$ into a bireflection $ab$, but how do we reflect $b$ in $a$? The solution is the one we intuitively use when being asked to write upside down. We turn the page, write, and turn the page back. Indeed, we can simply use composition and this `sandwich' construction to apply one transformation to another. The expression 
$aba^{-1}$ will reflect $b$ w.r.t $a$, while $(ab)c(b^{-1}a^{-1})$ will rotate reflection $c$ with rotation $ab$, which is equivalent to $a(bcb^{-1})a^{-1}$, applying reflection $a$ after reflection $b$ to $c$.
Because reflections do not commute, the order of application is important. Additionally, because applying the same reflection twice is the same as doing nothing, reflections satisfy $a^2 = \pm 1$, and hence $a^{-1} = a / a^2$.
(In Euclidean spaces $a^2 = 1$, whereas in general pseudo-Euclidean spaces we can also encounter reflections satisfying $a^2 = -1$.)
In \cref{figure_action} we use gauge symmetry to reflect $b$ w.r.t~$a$, similar to the process of  \cref{figure_freedom,figure_cartan}.
(Note that if orientation is relevant, we have to be a bit more careful, see \cref{sec:double_cover,sec:conjugation}.)
\begin{figure}[h]
  \noindent
  \centering
  \includegraphics[width=1.0\columnwidth]{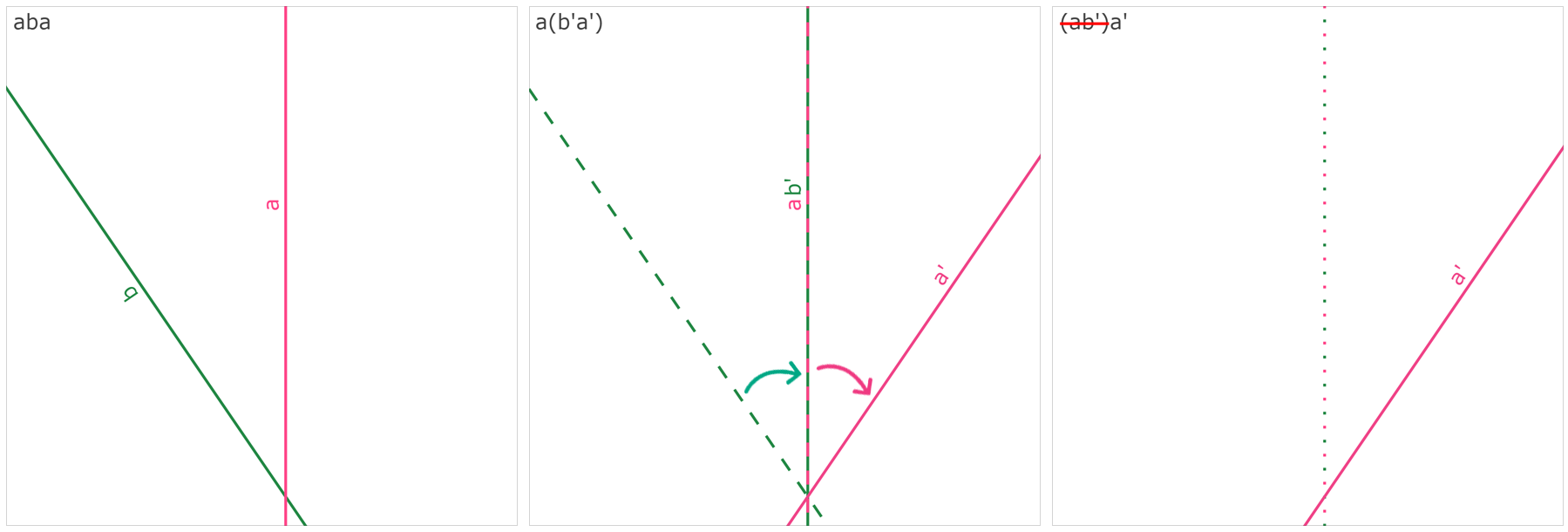}
  \caption{Reflecting the reflection $b$ in the reflection $a$. In the expression $aba$, we rotate $ba$ into $b'a'$ such that $b'=a$ and $ab'=a^2=1$. The remaining $a'$ is the reflection of $b$ w.r.t. $a$.}
  \label{figure_action}
\end{figure}

\subsection{Beyond \Eu{2}}
The Euclidean group of the plane is a great place to build intuition, but we would like to conclude the geometric introduction by showing how this mindset extends to other dimensionalities and groups.

\subsubsection{Pseudo-Euclidean \Eu{p, q}}
The ideas introduced above in \Eu{2} apply identically to higher dimensions and signatures,
where reflections are performed in hyperplanes in general pseudo-Euclidean groups $\Eu{p, q} \cong \Pin{p,q,1} / \{ \pm 1\}$.
The orthogonal basis hyperplanes of a pseudo-Euclidean group are $p$ positive hyperplanes and $q$ negative hyperplanes through the origin, and a single null hyperplane to represent the hyperplane at infinity.
In three dimensional space this is of course the important Euclidean group of rigid body motions. 
For general hyperplanes in $n = p+q$ pseudo-Euclidean dimensions, $n+1$ reflections create closure, providing an increasingly rich set of transformations as the number of dimensions increases.
However, as we shall prove in \cref{sec:invariant_decomposition}, these transformations can always be understood as a product of commuting bireflections acting in a plane and at most a single reflection. Such transformations are called \emph{simple}. 
Therefore, our intuitions from smaller dimensions remain valid in higher dimensions, \emph{plane and simple}.
\begin{figure}[h]
  \noindent
  \centering
  \includegraphics[width=1.0\columnwidth]{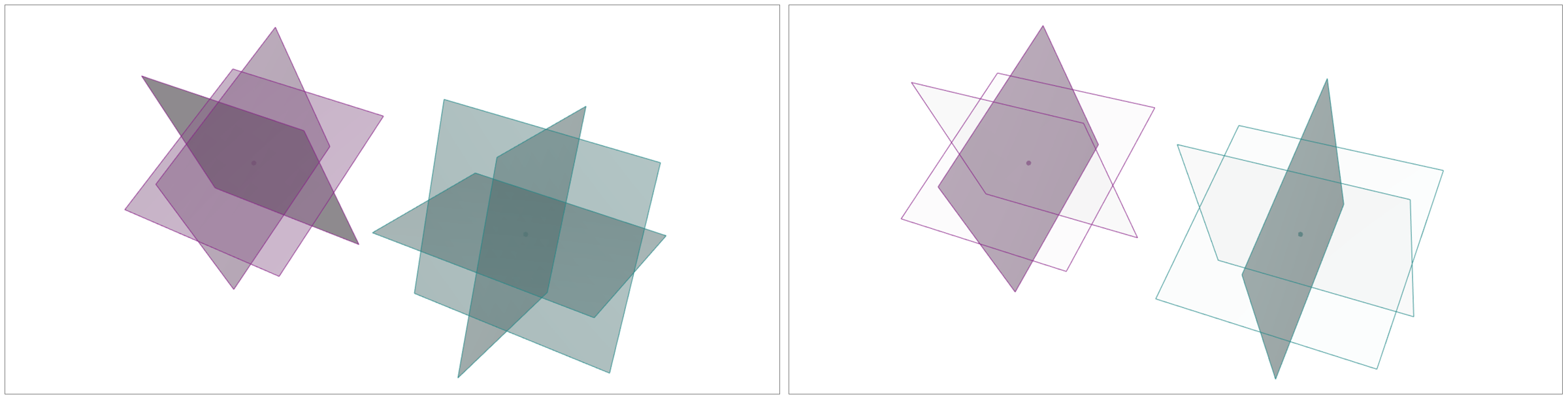}
  \caption{The composition of two point reflections in $E(3)$ produces a translation - as before we can use the gauge symmetries to rotate each set of three reflections around their intersection until identical reflections annihilate, ultimately leaving only two parallel reflections.}
  \label{figure_e3}
\end{figure}
\subsubsection{Orthogonal $\Or{p, q}$}

The Orthogonal group $O(p, q)$, which contains only reflections and rotations that keep the origin fixed, is a \emph{subgroup} of the pseudo-Euclidean group $E(p, q)$. As it is embedded in the Euclidean group, we can construct it by selecting only the reflections in (hyper)planes through the origin. In the orthogonal groups, closure is reached after exactly $n$ reflections. (This is easily verified in the orthogonal group in 2 dimensions $O(2)$, where any 3 reflections through the origin can be simplified to just one, using the gauge symmetry just like in \cref{figure_cartan}.)

\subsubsection{Conformal $\Or{p+1,q+1}$}

The Conformal group \Conformal{p,q}, whose transformations preserve local angles, is also generated from successive reflections. In this case they are inversions: `reflections' in hyperspheres (circles in 2D, spheres in 3D, ...). Realizing that a line can be seen as a circle with an infinite radius, it is clear that the pseudo-Euclidean group \Eu{p,q} is a subgroup of the conformal group \Conformal{p,q}.

\begin{figure}[h]
  \noindent
  \centering
  \includegraphics[width=1.0\columnwidth]{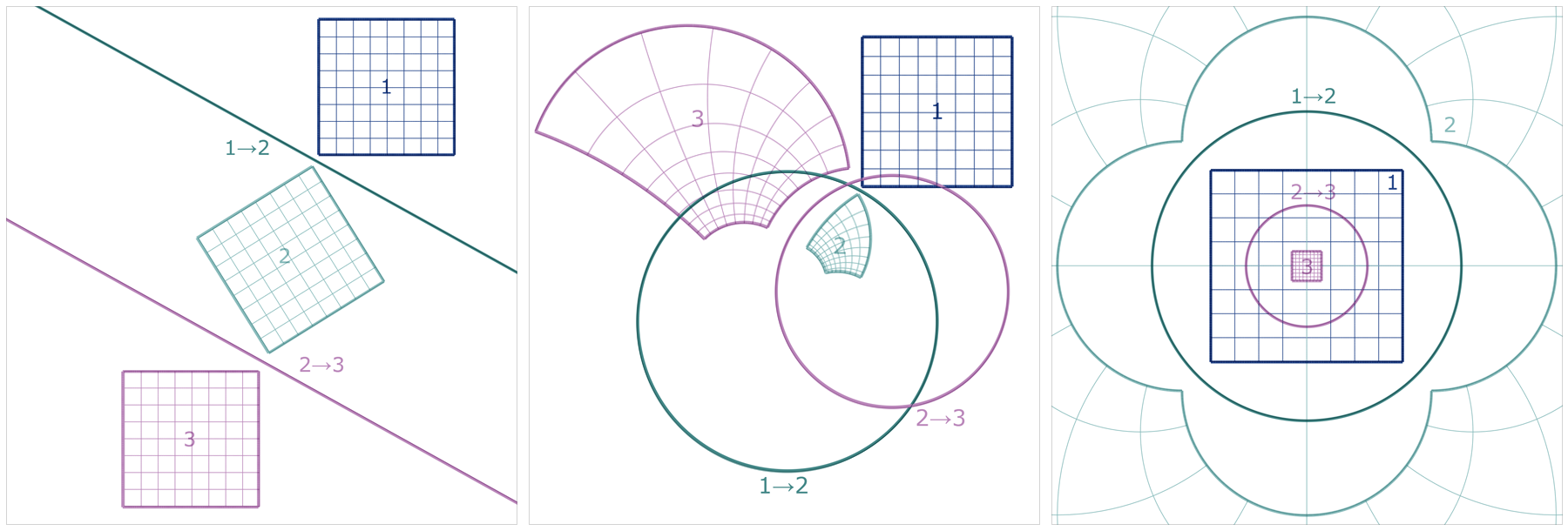}
  \caption{Reflecting grid $1$ in two circles. (left) Reflections in two parallel lines (circles with infinite radius) produce translations. (middle) Reflections in two general circles produce conformal transformations. (right) Reflections in two circles with a shared center produce dilations. }
  \label{figure_conform}
\end{figure}

\subsection{Observing orientation leads to double covers}\label{sec:double_cover}

The Pin and Spin groups, which will turn out to play an important role in the remainder of this article, are the double covers of the orthogonal and special orthogonal groups respectively. They too admit a simple geometric interpretation that fits in the same intuitive framework.

The hyperplanes (or hyperspheres) in which we reflect, are assigned an orientation, i.e. a front and back. We simply discriminate both half spaces of our fundamental reflection, a property that naturally extends to the higher grade transformations. 
As an example, \cref{fig:figure_pin} shows the four different ways in which the same rotation can be performed. We arrive at four because a rotation is a bireflection, and each of the constituent reflections can have two different orientations.
Combinations of oriented reflections produce oriented rotations, which arrive at the same final position via either the short or long path. 

The reflection group which preserves orientation is called the Pin group, denoted \Pin{p,q,r}, where $p$, $q$, and $r$ are the number of positive, negative and null dimensions respectively.
By distinguishing the orientation, the Pin group always has two distinct elements representing the same transformation, while the orthogonal group has only one. This is why \Pin{p,q,r} is a double cover of \Or{p,q,r} \cite{Poteous1969,Vaz:2016qyw}:
\begin{equation*}
    \Or{p,q,r} \cong \Pin{p,q,r} / \{\pm 1 \}.
\end{equation*}
Similarly, the group of oriented $2k$-reflections, \Spin{p,q,r}, is a double cover of the group of non-oriented $2k$-reflections, \SO{p,q,r}:
\begin{equation*}
    \SO{p,q,r} \cong \Spin{p,q,r} / \{\pm 1 \}.
\end{equation*}
Various other noteworthy double covers are listed in \cref{tab:double_cover}. 
The graded symmetry group approach makes it straightforward to preserve orientation, and so the rest of this paper will concern \Pin{p,q,r} unless otherwise specified.

\begin{figure*}[ht]
  \noindent
  \centering
  \includegraphics[width=1.0\textwidth]{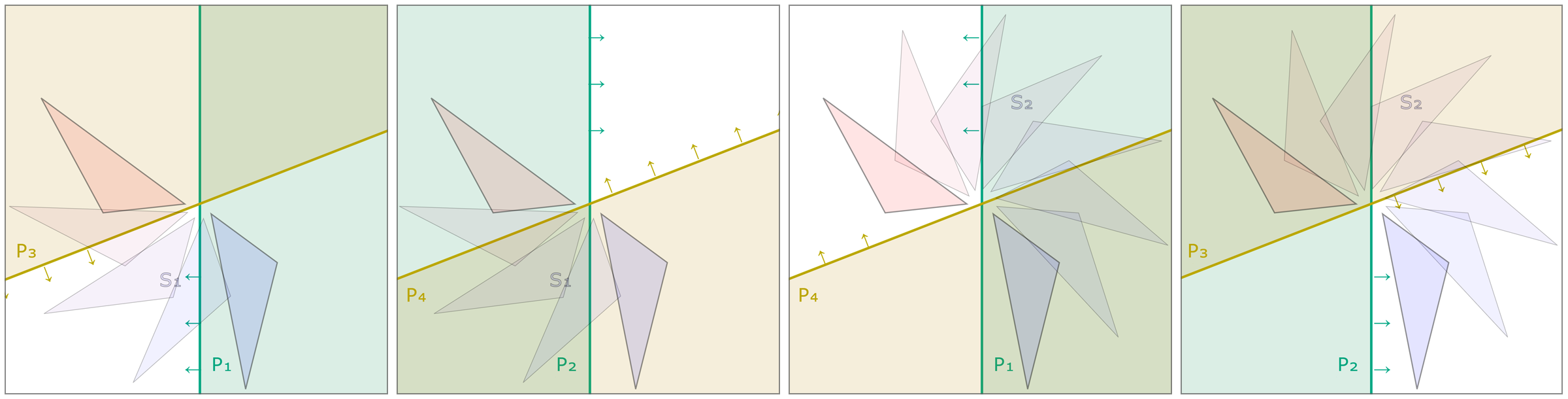}
  \caption{One rotation from $\text{SO}(2)$, that takes the blue triangle to the red one, two distinguishable elements $S_i$ from $\text{Spin}(2)$ that result in the same rotation (the two paths), and 4 oriented reflections $P_j$ from $\text{Pin}(2)$ that combine into $S_i$. Note that as before, any two reflections with the same intersection point and relative angle and orientation will produce the same elements in $\text{Spin}(2)$ and $SO(2).$}
  \label{fig:figure_pin}
\end{figure*}
\begin{table}[h]
    \centering
    \begin{tabular}{c}
    $k$-reflections\\
    \hline
        $
            \begin{array}{r @{{}\cong{}} r @{{}/{}} l}
                \Or{p,q,r} & \Pin{p,q,r} & \{\pm 1 \} \\
                \Eu{p,q} & \Pin{p,q,1} & \{\pm 1 \} \\
                \Conformal{p,q} & \Or{p\text{+}1,q\text{+}1} \, & \{\pm 1 \} \\
                & \Pin{p\text{+}1,q\text{+}1}  & \{\pm 1, \pm i\}
            \end{array}
        $
    \end{tabular}
    \begin{tabular}{c}
        $2k$-reflections\\
        \hline
        $
            \begin{array}{r @{{}\cong{}} r @{{}/{}} l}
                \SO{p,q,r} & \Spin{p,q,r} & \{\pm 1 \} \\
                \SE{p,q} & \Spin{p,q,1} & \{\pm 1 \} \\
                \SConformal{p,q} & \SO{p\text{+}1,q\text{+}1} \, & \{\pm 1 \} \\
                & \Spin{p\text{+}1,q\text{+}1}  & \{\pm 1, \pm i\}
            \end{array}
        $
    \end{tabular}
    \vspace{0.2cm}
    \caption{The (special) orthogonal, pseudo-Euclidean and conformal groups are unoriented versions of Pin groups \cite{Poteous1969,Vaz:2016qyw}.}
    \label{tab:double_cover}
\end{table}

\section{Geometric Algebra}\label{sec:geometric_algebra}

For practical calculations involving symmetry groups, the typical approach is to use matrix representations.
We would however like to consider an alternative approach, which models symmetry groups using geometric algebras (Clifford algebras) over the real numbers. 
This choice is based on our geometric intuition: combining reflections imposes a graded structure, a property shared with geometric algebras, yet hidden in the matrix approach.

\subsection{Introduction}

We start by giving a brief introduction to the geometric algebra concepts used in this paper.
For a full pedagogical introduction to geometric algebra we refer the reader to \cite{GA4Ph,GA4CS}, although the approach taken in this paper extends that taken in traditional resources.
A geometric algebra $\GR{pqr}$ of $n$ = $p+q+r$ dimensions has $p$~positive, $q$~negative, and $r$~null vectors $\e{i}$, with
    \[\e{i} \e{i} \in \{+1,-1,0\}, \quad \e{i}\e{j}=-\e{j}\e{i}. \]
A product of $k$ basis vectors is a basis $k$-blade, denoted e.g.
    \[\e{ij} = \e{i} \e{j} ,\quad \e{ijk} = \e{i}\e{j}\e{k}, \quad \ldots, \quad I := \e{1 2 \ldots n} = \e{1} \e{2} \cdots \e{n}. \]
The highest grade basis blade $I := \e{1 2 \ldots n}$ is the \emph{pseudoscalar} of $\GR{pqr}$.
These $k$-blades combine into $k$-vectors (of \emph{grade} $k$), e.g. a vector
$a = \sum_i a_i\e{i}$, a bivector $B = \sum_{i > j} B_{ij}\e{ij}$, etc.
A general \emph{multivector} $x$ is a sum of $k$-vectors:
    \[x = \expval{x} + \expval{x}_1 + \expval{x}_2 + \ldots + \expval{x}_n, \]
where $\expval{x}_k$ denotes the grade $k$ part of $x$ and $\expval{x} := \expval{x}_0$ is the scalar part.
The product of a $k$-vector $x$ and $l$-vector $y$
    \[xy = \expval{xy}_{\lvert k-l \rvert} + \expval{xy}_{\lvert k-l+2 \rvert} + \ldots + \expval{xy}_{k+l-2} + \expval{xy}_{k+l},\]
has lowest grade $\lvert k-l\rvert$ and highest grade $k+l$, that define the \emph{inner} (dot) and \emph{outer} (wedge) products
    \[x \cdot y := \expval{xy}_{\lvert k-l \rvert},\quad x \wedge y := \expval{xy}_{k+l}. \]
A general $k$-blade is the wedge product of $k$ vectors, or equivalently the product of $k$ anti-commuting/orthogonal vectors.
Because of this property, blades always square to scalars and are called \emph{simple}. E.g. $(x \wedge y)^2 = (x' y')^2 = - x'^2 y'^2 \in \mathbb{C}$, where $x' \cdot y' = 0$.

Another useful product is the \emph{commutator product}, defined by 
    \[x \times y := \tfrac{1}{2} \pqty{xy-yx}. \]
The main involution $\hat x$ of a multivector $x$ is defined by the map $\e{i} \leftrightarrow -\e{i}$. 
The reverse $\tilde x$ of a multivector $x$ is defined by reversing the order of the basis factors $\e{i}\e{j}...\e{n} \leftrightarrow \e{n}..\e{j}\e{i}$. Both of these maps reduce to grade-dependent sign changes on the original multivector.

The squared norm of any element is defined as
    \begin{equation}
        \norm{x} := \begin{cases}
            \sqrt{x\tilde{x}} & \Re\pqty{ x\tilde{x}} \geq 0 \\
            \sqrt{- x\tilde{x}} & \Re\pqty{ x\tilde{x}} < 0
        \end{cases}
    \end{equation}
Using the norm, any element $x$ for which $\norm{x} \neq 0$ can be normalized to
    \begin{equation}\label{eq:norm}
        \overline{x} := x / \norm{x}.
    \end{equation}

\subsection {Embedding reflections}
To obtain an algebraic representation of graded symmetry groups, we need to establish an embedding which maps grade 1 reflections onto grade 1 vectors, while mapping the product to the composition operator.
Because reflections square to $\pm 1$, this is an ideal task for Clifford algebras.
For the Pin groups \Pin{p,q} finding this algebraic representation is a straightforward task, as the invertible vectors $v \in \GR{pq}$, 
normalized such that $v^2 = \pm 1$, 
can directly be identified with reflections in hyperplanes through the origin. 
This is because a hyperplane through the origin, defined via a linear equation $ax + by + cz + ... = 0$, can be mapped onto a vector representing that hyperplane using
\begin{equation*}
    ax + by + cz + \ldots = 0 \rightarrow v = a\e1 + b\e2 + c\e3 + \ldots .
\end{equation*}
However, for the pseudo-Euclidean and conformal groups the process is a bit more involved. To realize a GA representation of a pseudo-Euclidean group, we need to represent general hyperplanes, not just those through the origin. This is accomplished by the embedding
\[
ax + by + cz + ... + \delta = 0 \rightarrow v = a\e1 + b\e2 + c\e3 + ... + \delta \e0,
\]
where $\delta$, the oriented offset from the origin, is associated with the null vector $\e0$ of $\GR{p,q,1}$.

For the Conformal group, it is customary to first define a Witt basis, where two null-vectors are defined to represent spheres at the origin with zero and infinite radius.
\[
n_o := \frac 1 2 (\e{-} - \e{+}), \quad
n_\infty := \e{-} + \e{+}.
\]
Here $\e{+}, \e{-}$ are the extra positive and negative basis vectors of $\GR{p+1,q+1}$ respectively. A general hypersphere with radius $\rho$ at position 
\[ x = x^1 \e1 + x^2 \e2 + x^3 \e3 + ...\]
is now mapped to a vector using\footnote{Note that this embedding is dual to the customary one used in CGA, where hyperspheres are instead represented by $(n-1)$-vectors \cite{GA4CS}}
\[
 v = n_o + x + \tfrac 1 2 (x^2 - \rho^2) n_\infty.
\]
For the remainder of this paper we will only refer to hyperplanes, in the understanding that the word hyperplane can always be replaced by hypersphere.
In all cases these embeddings produce vectors that represent the desired reflections or inversions respectively, and compose as required using the geometric product. 
With this minimal amount of setup we obtain a unified algebraic method to perform practical calculations with graded symmetry groups. 

\subsection{Conjugation - Applying reflections}\label{sec:conjugation}

With the identification of oriented hyperplanes with signed vectors in mind, a hyperplane $v$ reflects in a hyperplane $u$ as
    \begin{equation}
        v \mapsto u[v] = - u v u^{-1}.
    \end{equation}
This ensures that $u[u] = -u$, and hence a hyperplane reflected in itself changes orientation. 
For an intuitive understanding of the sandwich structure we think back to \cref{Group Action}: when asked to write upside down, one simply rotates the paper, writes, then rotates the paper back.
Because any $l$-reflection $V = v_1 v_2 \cdots v_l$ should transform covariantly under a $k$-reflection $U = u_1 u_2 \cdots u_k$, i.e.
    \[ U[V] = U[v_1] U[v_2] \cdots U[v_l], \]
the transformation of $V$ under $U$ is
    \begin{equation}\label{eq:conjugation}
        V \mapsto U[V] = (-1)^{kl} U V U^{-1}.
    \end{equation}
The term $(-1)^{kl}$ ensures the correct orientation. 
We recognize this transformation law as that of the twisted Clifford–Lipschitz group \cite[Chapter 5.2]{Vaz:2016qyw}.
It is important to note that from an algebraic point of view, the term $(-1)^{kl}$ is not required; the twisted and non-twisted Clifford–Lipschitz groups are isomorphic \cite[Chapter 5.2]{Vaz:2016qyw}. However, the geometric demand for the correct orientation \emph{does} force the inclusion of the minus signs.

By solving $U[I] = \det(U) I$, the determinant of the transformation $U$ is found to be
    \[ \det(U) = (-1)^k. \]
Hence, if $U$ is odd it inverts handedness, whereas an even $U$ preserves handedness.

\subsection{Simple Rotors}\label{sec:continuous_transformations}
The composition of two reflections $u$ and $v$ produces the \emph {simple} bireflection $uv$, representing a continuous transformation with twice the separation $\sigma$ from $v$ to $u$, as shown in \cref{fig:figure_pin}.
When applying the bireflection $uv$ to $v$, the result is $(uv)v(uv)^{-1} = u v u^{-1}$, and hence $v$ is effectively reflected in $u$ such as displayed in \cref{figure_action}.
If $uv$ is applied $x$ times, this rotates $v$ by $2 x \sigma$, and is the same as applying $(uv)^x$ once. 
How is this to be extended to $x \in \mathbb{R}$, such that rotations over an arbitrary separation can be performed?
The first step might be to construct $\sqrt{uv} = (uv)^{1/2}$, which will rotate $v$ directly onto $u$. To do so, observe that reflection in a bisector
    \begin{equation*}
        w_\pm = \overline{u \pm v},
    \end{equation*}
where the bar denotes normalization as defined in \cref{eq:norm},
already has the effect of mapping $v$ onto $\pm u$. Specifically,
    \begin{align*}
        - w_{\pm} v w_{\pm}^{-1} &= - \frac{(u \pm v) v (u \pm v)}{(u \pm v)^2} \\
        &= - \frac{(u \pm v) (\pm v + v^2 u^{-1}) (\pm u)}{(u \pm v)^2} = \mp u,
    \end{align*}
where we have used that $u^2 = v^2$ to find $v^2 u^{-1} = u$.
When this is followed by a reflection in $u$, or preceded by a reflection in $v$, the results are the bireflections $(\pm uv)^{\frac{1}{2}} = u w_{\pm} = w_{\pm} v$, rotating $v$ into $\pm u$:
    \begin{align*}
        (\pm uv)^{\frac{1}{2}} \, v \, (\pm uv)^{-\frac{1}{2}} &= (u w_{\pm}) v (w_{\pm}^{-1} u^{-1}) \\
        &= (w_{\pm} v) v (v^{-1} w_{\pm}^{-1}) \\
        &= \pm u.
    \end{align*}
This goes back to the notion of double cover as expressed in \cref{fig:figure_pin}, since applying $(\pm uv)^{1/2}$ twice is identical to applying $\pm uv$, both of which result in the same final state, but by rotating in the opposite direction.

Now that we have found how to perform a rotation by $x=1/2$, it is clear that a rotation over any separation can be performed by forming a bireflection
    \begin{equation*}
        u (\overline{y_1 u + y_2 v}) = (\overline{y_1 u + y_2 v}) v,
    \end{equation*}
where $y_1, y_2 \in \mathbb{R}$. Because $y_1$ and $y_2$ are linked via the normalization condition, it follows that the bireflection $R = uv$ forms a one-parameter subgroup $\Refl(x) = R^x = (uv)^x$ of rotations of $2 x \sigma$ about the intersection of $u$ and $v$.
The one-parameter subgroup 
    \begin{equation*}
        \Refl(x) = R^x = e^{x \Ln R}
    \end{equation*}
is generated by the bivector $\bl = \Ln R$. To show that $\bl$ is a bivector, we consider the derivative of the normalization condition $\Refl \widetilde{\Refl} = \Refl \Refl^{-1} = 1$:
    \begin{align*}
        \dot{\Refl} \widetilde{\Refl} + \Refl \dot{\widetilde{\Refl}} = 0.
    \end{align*}
But $\widetilde{\pqty{\Refl \dot{\widetilde{\Refl}}}} = \dot{\Refl} \widetilde{\Refl}$, and thus $\widetilde{\dot{\Refl} \widetilde{\Refl}} = - \dot{\Refl} \widetilde{\Refl}$.
Only bivectors anticommute under reversal, it follows that $\dot{\Refl} \widetilde{\Refl}$ is a bivector.\footnote{For $2k$-reflections with $2k < 6$ this argument suffices; for a general proof see \cref{liealgebra}.} Explicit calculation then gives
    \[ \dot \Refl \widetilde{\Refl} = \bl \Refl \widetilde{\Refl} = \bl, \]
and thus $\bl = \dot \Refl \widetilde{\Refl} = \Ln R$ is a bivector.
The principal logarithm $\Ln R$ is given by
    \begin{align}
        \Ln R &= \begin{cases}
            \overline{\expval{R}_2} \,  \arccosh\pqty{\expval{R}} & \expval{R}_2^2 > 0 \\
            \expval{R}_2 & \expval{R}_2^2 = 0 \\
            \overline{\expval{R}_2} \,  \arccos\pqty{\expval{R}} & \expval{R}_2^2 < 0
        \end{cases}. \label{eq:simple_log}
    \end{align}
Note that the form $\Ln R = \dot R \widetilde R$ reveals the logarithm to be exactly the derivative $\dot R$ moved back with $\widetilde R$ to the origin, as expected.
If the bireflection $R$ is a spatial rotation, i.e. $\expval{R_i}_2^2 < 0$, then \cref{eq:simple_log} ensures a full $2 \pi$ range.\footnote{Typically the two parameter $\arctantwo(y,x)$ function is invoked to maintain $2\pi$ resolution, as it does all the bookkeeping needed to determine the correct quadrant. However, all such manual bookkeeping can be avoided by using \cref{eq:simple_log}.}
Additionally, the logarithm of a rotation is by no means unique. The principal logarithm $\Ln R$ is one such logarithm, but so is
\begin{equation*}
	\ln R = \Ln R + 2 \pi m \overline{\expval{R}_2},
\end{equation*}
where $m \in \mathbb{Z}$.
The principal logarithm of $2k$-reflections will be the subject of \cref{sec:logarithm}, and \cref{eq:simple_log} is at its core.

The bireflections $R = uv$ and $ \widetilde{R} = vu$ are invariants of the rotations $\Refl(x) = R^x$:
    \begin{equation*}
        R^x R R^{-x} = R, \qquad R^x \widetilde{R} R^{-x} = \widetilde{R},
    \end{equation*}
and so is any linear combination of $uv$ and $vu$.
Of particular interest are the symmetric and anti-symmetric combinations
    \begin{equation*}
        \begin{array}{r @{{}={}} c @{{}+{}} c}
            R & \tfrac{1}{2} \pqty{R + \widetilde{R}} 
            & \tfrac{1}{2} \pqty{R - \widetilde{R}} \\
            e^{\bl} & \tfrac{1}{2} \pqty{e^{\bl} + e^{-\bl}} 
            & \tfrac{1}{2} \pqty{e^{\bl} - e^{-\bl}}.
        \end{array}
    \end{equation*}
These combinations allows us to define the generalized cosine and sine functions:
    \begin{alignat}{5}
        \co\pqty{\bl} &:= u \cdot v &&= \expval{R} &&= \tfrac{1}{2} \pqty{e^{\bl} + e^{-\bl}}, \label{eq:generalized_cos} \\
        \si\pqty{\bl} &:= u \wedge v &&= \expval{R}_2 &&= \tfrac{1}{2} \pqty{e^{\bl} - e^{-\bl}}. \label{eq:generalized_sin}
    \end{alignat}
Since $e^z = \sum_{n=0}^\infty z^n / n!$ and $\bl^2 \in \mathbb{C}$, these can be simplified to
    \begin{alignat*}{3}
        \co\pqty{\bl} &= \sum_{n=0}^\infty \tfrac{1}{(2n)!} \bl^{2n} 
        &&= \cosh(\sqrt{ \bl^2 }) \\
        \si\pqty{\bl} &= \sum_{n=0}^\infty \tfrac{1}{(2n+1)!} \bl^{2n+1} 
        &&= \bl \sinhc\pqty{\sqrt{ \bl^2 }}.
    \end{alignat*}
In the last step we introduced the $\sinhc$ function over the complex numbers:
    \begin{equation*}
        \sinhc(z) := \begin{cases} \frac{\sinh(z)}{z} & z \neq 0 \\
        1 & z = 0
        \end{cases},
    \end{equation*}
where $z \in \mathbb{C}$. It follows that $\bl \propto \si(\bl) = u \wedge v$, and hence $\bl$ is a 2-blade.
It might seem unusual that $\bl^2 \in \mathbb{C}$, while we are describing Clifford algebras over the real numbers.
Indeed it is, but we will find that the invariant decomposition of a real $k$-reflection can nonetheless result in complex simple bireflections, see e.g. \cref{ex:riesz}. 
However, this is merely a manifestation of the fundamental theorem of algebra, and so no more mysterious.

The hyperplanes $u$ and $v$ intersect in a hyperline. This hyperline is shared, and left invariant, by the whole one-parameter subgroup $\Refl(x) = (uv)^x$. Therefore we wish to associate the (hyper)line with the invariant of $\Refl(x)$, the blade $\bl \propto u \wedge v$. 
This brings us to the definition of the elements of geometry as invariants of transformations.

\section{Elements of Geometry}\label{eigen}

We would like to identify which multivectors of our geometric algebra make natural representations of the elements of geometry, such as points, lines, planes, spheres, etc.
Taking \Eu{3} as an example, reflections are associated with planes by construction.
Two planes $u$ and $v$ \emph{meet} in a line, $u \wedge v$ \cite{PGA4CS}.
This association is strengthened by the observation that the bireflection $uv$ generates a rotation around a line, and is generated by the 2-blade $\bl \propto u \wedge v$, which represents this line.
Therefore, elements of geometry are blades, i.e. outer products of hyperplanes (or equivalently the product of orthogonal hyperplanes):
    \[ \underbrace{1}_{\text{space}}, \quad \underbrace{u}_{\text{hyperplane}}, \quad \underbrace{u_1 \wedge u_2}_{\text{hyperline}}, \quad \ldots, \quad \underbrace{u_1 \wedge \ldots \wedge u_{n} \propto I}_{\text{origin}} . \]
This approach to identifying elements of geometry as blades, or equivalently as invariants of transformations, is valid in all dimensions, and importantly it includes the ideal elements at infinity as valid elements of geometry.

In pseudo-Euclidean spaces \GR{pq1}, there are two different narratives. 
Taking \Eu{2} as an example, the vectors are planes through the origin $I$. These intersect the Euclidean plane to form lines. Similarly, bivectors are lines through the origin $I$, which intersect the Euclidean plane in points. As pseudo-Euclidean spaces \GR{pq1} serve as homogenous representations of the pseudo-Euclidean plane, the natural narrative is to ignore the embedding and view vectors as hyperplanes (not through the origin) of \GR{pq}, i.e. lines in the example of \Eu{2}. This is the narrative chosen in this paper, but sometimes it can be helpful to switch narrative.

Importantly, the geometric algebra approach makes no distinction between elements of geometry and transformations, both are of multivector type, and both transform identically.
By contrast,
in the standard matrix formalism, a Pin group element $R \in \Pin{p,q,r}$ is represented by a $d$-dimensional matrix representation $D(R)$, which respects the composition law of the group:
    \begin{equation}
        D(R_1 R_2) = D(R_1) D(R_2),
    \end{equation}
where $R_1, R_2 \in \Pin{p,q,r}$. These are then used to transform vectors $\vec{x}$ of dimension $d$ in the underlying vector space:
    \begin{equation}
        \vec{x} \mapsto D(R) \vec{x}.
    \end{equation}
Every element of geometry is represented by a vector of different dimension. Considering \Eu{3} as a guiding example, both points and planes are $4$ dimensional, while lines are represented by $6$ dimensional vectors satisfying the Plücker conditions. These matrix representations are given explicitly in \cref{eq:matrix_E3}.

The matrix approach therefore creates a hard distinction between elements of geometry (vectors) on the one hand, and group elements (matrices) on the other: the former transforms under the matrix-vector product, while the later transforms under conjugation via the matrix product. 
Additionally, for a given transformation $R$, different types of geometric elements transform under different matrix representations $D(R)$ of that transformation. Even for elements of geometry with the same number of dimension, such as points and planes in \Eu{3}, the transformation matrices are not identical.
In \cref{sec:matrix_E3} the matrix representations of \Eu{3} are given, to illustrate the relationship of geometric elements to their matrix representations.

The study of these matrix representations gives rise to the rich mathematical field of representation theory, for sources specific to (particle) physics see e.g. \cite{Weinberg:1995mt,bargmann}.
However, the graded symmetry group approach presents an alternative way of studying representation theory via $k$-blades.

\paragraph{Summary: Plane and simple}\label{plane_and_simple}

The association of vectors with hyperplanes, and not with points, has several advantages. 
Most importantly, it allows the graded view of reflections and their compositions.
Using again $\Eu{3}$ to illustrate, vectors are naturally associated with the plane they reflect in. 
When a vector $v$ represents a plane, the conjugation of an element $X$ with $\Refl_v = v$ represents the associated reflection of $X$. 
The composition of two orthogonal (or parallel) reflections in planes is associated with the (possibly ideal) line shared by both planes. 
As a result, when a bivector $b$ represents a line, it can be exponentiated to generate the family of bireflections $\Refl_b = e^{b}$ that leave that line invariant, i.e. around that line. 
Similarly, the composition of three orthogonal reflections produces a trivector $t$ that is naturally associated with the (possibly ideal) point where the three planes intersect.
And again this multivector represents at the same time the unique point reflection $\Refl_t = t$ that leaves that point invariant.
We call this identification of vectors with hyperplanes the \emph{plane based view}.

Identifying the geometric elements this way may seem counter intuitive at first, but the strong link between elements and their associated transformations greatly simplifies many applications. The bivectors we have considered above are all constructed as the product of two orthogonal planes, or equivalently as the outer product of two arbitrary non-identical planes. Such bivectors are called \emph{simple}. However, not all bivectors are simple. Recall that $\Eu{3}$ allows us to combine up to four reflections: the screw motions, which leave a set of two orthogonal lines invariant. (The screw axis and an orthogonal infinite line around it.) For such a transformation the associated invariant will be a non-simple bivector, namely a linear combination of the screw axis line and its orthogonal ideal line. As we will show, any non-simple bivector in an $n$-dimensional space can be decomposed into at most $\floor{n / 2}$ commuting simple bivectors. A novel procedure to do so is outlined in the following sections and is key to efficient calculation of the exponential map and various other multivector functions.

\section{Spin Groups and Algebras}\label{liealgebra}

Any composition of $m$ reflections $U = u_1 u_2 \cdots u_m$ is an element of the Lie group \Pin{p,q,r}, where $u_i^2 = \pm 1$, depending on the metric. 
We already saw in \cref{sec:continuous_transformations} that a bireflection $R_i = u_i v_i$ can be raised to the power $t \in \mathbb{R}$, and determines a one-parameter subgroup
    \[ \Refl_i(t) = R_i^t = e^{t \Ln R_i}, \]
with the geometric interpretation of a rotation, translation, or hyperbolic rotation (boost).
While the bireflection $R_i = u_i v_i$ is an element of the Lie group \Spin{p,q,r}, the bivector $\Ln R_i$ is an element of the Lie algebra \spin{p,q,r}.
Similarly, any $2k$-reflection $R = \prod_{i=1}^k R_i$ determines a one-parameter subgroup
    \begin{align*}
        \Refl(t) &= R^t =  e^{t \Ln R} \\
        &= e^{t \Ln R_1} e^{t \Ln R_2} \cdots e^{t \Ln R_k},
    \end{align*}
which is a product of $k$ rotations, translations, or boosts.
As we shall prove in \cref{th:bivector_generator}, the generator $\Ln{R}$ of the $2k$-reflection $R$ is still a bivector, and thus the Lie algebra \spin{p,q,r} is a bivector algebra \cite{LGasSG}.
In general the simple bivectors $\bl_i := \Ln R_i$ do not commute, and thus $\Ln R$ is given by the Baker–Campbell–Hausdorff formula \cite{GA4Ph,hall2003lie}:
    \begin{align*}
        \Ln{R} &=
        \sum_{i=1}^k \bl_i + \sum_{i < j} \bl_i \times \bl_j + \order{\bl_i \bl_j \bl_k},
    \end{align*}
where $\order{\bl_i \bl_j \bl_k}$ contains higher order commutators.
However, as we will prove in \cref{cor:simple_rotors}, there always exists a factorization of $R$ into $R = (u_1' v_1') (u_2' v_2') \cdots (u_k' v_k')$, such that the bireflections $u_i' v_i'$ are mutually commuting. 
In terms of these mutually commuting bireflections the principal logarithm is just
    \begin{align*}
        \Ln{R} &= \sum_{i=1}^k \Ln\pqty{u_i' v_i'}.
    \end{align*}
Any element $R \in \Spin{p,q,r}$ can therefore be understood as a product of \emph{simple} bireflections $u_i' v_i' \in \Spin{p,q,r}$, each of which follows the generalized Euler's formula
    \begin{equation*}
        u_i' v_i' = e^{\bl_i'} = \co\pqty{\bl_i'} + \si\pqty{\bl_i'}.
    \end{equation*}
In \cref{sec:factor_even} we discuss how the factorization into bireflections $u_i' v_i'$ can be performed, after which \cref{sec:logarithm} discusses how the principal logarithm $\Ln{R}$ can be found explicitly.
Since all classical Lie groups are isomorphic to Spin groups \cite{LGasSG}, this factorization is expected to be applicable to all classical Lie groups, although the scope of the current work is limited to Pin groups.

We conclude this section by proving that $\Ln{R}$ is always a bivector.

\begin{theorem}\label{th:bivector_generator}
    Any $2k$-reflection $R = (u_1 v_1) (u_2 v_2) \cdots (u_k v_k)$ is generated by a bivector.
\end{theorem}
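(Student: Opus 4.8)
The plan is to show that $R\widetilde R = 1$ forces $\Ln R$ (equivalently $\dot R \widetilde R$ along the one-parameter subgroup $R^t$) to contain only grade $2$. The excerpt already gave the essential argument for small $k$: differentiating $R\widetilde R = 1$ yields $\dot R \widetilde R = -\widetilde{\dot R \widetilde R}$, so $\dot R \widetilde R$ is anti-invariant under reversal. The reverse acts on a grade-$j$ part by the sign $(-1)^{j(j-1)/2}$, which is $-1$ precisely for $j \equiv 2, 3 \pmod 4$. So from reversal alone we only learn that $\dot R \widetilde R$ lives in grades $2,3,6,7,10,11,\dots$; we must kill the odd grades and the grades $6,7,10,\dots$ by other means.

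First I would dispatch the odd grades: since $R$ is a product of an even number of vectors it is an even multivector, $\widetilde R$ is even, $\dot R$ is even (the derivative of an even element along $R^t$), hence $\dot R \widetilde R$ is even. Combined with the reversal computation this already leaves only grades $\equiv 2 \pmod 4$, i.e. $2, 6, 10, \dots$. To eliminate grades $\ge 6$ I would bring in the main involution $\hat{(\cdot)}$ together with the geometric fact that $R$ is a versor: $R$ is in the Clifford–Lipschitz group, so $R v R^{-1}$ is a vector (grade $1$) for every vector $v$. The plan is to translate this into a constraint on $\bl := \Ln R$. Concretely, from $R^t v R^{-t}$ being a vector for all $t$, differentiating at $t=0$ gives that $\bl v - v \bl = [\bl, v]$ is a vector for every vector $v$; equivalently the commutator map $v \mapsto \bl \times v$ sends grade $1$ to grade $1$. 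Expanding $\bl = \sum_j \expval{\bl}_j$ and using that $\expval{\bl}_j \times v$ has grade parts $j-1$ and $j+1$ (the $j\pm 1$ grades of the geometric product $\expval{\bl}_j v$), one sees that for $\expval{\bl}_j \times v$ to have grade $1$ for all $v$ we need $j - 1 = 1$ or the contribution to vanish; a short computation rules out every $j \ge 3$, leaving $j = 1$; and grade $1$ is already excluded by parity. Hence $\bl$ is purely grade $2$.

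An alternative, perhaps cleaner, route that I would present is to note $\bl$ is characterized by $R = e^{\bl}$ with $\widetilde{e^{\bl}} = e^{-\bl}$ (from $R\widetilde R = 1$ and choosing the principal branch), so $\widetilde{\bl} = -\bl$, i.e. $\bl$ lies in grades $\equiv 2,3 \pmod 4$; parity of $R$ being even forces $\bl$ even, killing grade $3$ and all odd grades; then the versor condition (adjoint action preserves grade $1$) kills grades $6,10,\dots$ exactly as above. I would also remark that this is where the decomposition results of the paper make everything concrete: by \cref{cor:simple_rotors} $R$ factors into mutually commuting simple bireflections $u_i'v_i'$, and each $\Ln(u_i'v_i')$ is a $2$-blade by \cref{eq:generalized_sin} and the discussion of \cref{sec:continuous_transformations}; since they commute, $\Ln R = \sum_i \Ln(u_i'v_i')$ is a sum of bivectors, hence a bivector. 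This last argument is the shortest of all and would be my preferred writeup, contingent on \cref{cor:simple_rotors} not depending circularly on this theorem.

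The main obstacle is precisely avoiding circularity and handling the grade-$6$-and-higher terms rigorously. The reversal-and-parity argument alone is genuinely insufficient in dimension $n \ge 6$ — that is exactly why the excerpt's footnote flags "$2k < 6$" — so the proof must invoke something more: either the versor/Clifford–Lipschitz grade-preservation property, or the invariant decomposition. I expect the cleanest honest path is to base the proof on \cref{cor:simple_rotors} (commuting simple factors, each with bivector logarithm), provided the logical order in the paper permits it; otherwise I would use the adjoint-action argument, whose only subtlety is the bookkeeping of which grades of $\expval{\bl}_j \times v$ can land in grade $1$, a finite check that I would carry out grade by grade.
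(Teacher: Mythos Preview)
Your adjoint-action route is essentially the paper's own proof. The paper differentiates the normalization condition to get $\widetilde{\dot\Refl\widetilde\Refl}=-\dot\Refl\widetilde\Refl$, combines this (implicitly) with evenness to restrict to grades $2+4m$, and then differentiates $u_\Refl=\Refl u\widetilde\Refl$ to obtain $\dot u_\Refl = 2(\dot\Refl\widetilde\Refl)\times u_\Refl$; since the left side is a vector and the commutator of an $r$-vector with a $1$-vector lands in grades $r\pm1$, the only surviving grade is $r=2$. That is precisely your ``versor condition'' argument, so on the substantive route you and the paper agree. (Small slip in your bookkeeping: ``leaving $j=1$'' should read $j=2$.)

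Your ``preferred writeup'' via \cref{cor:simple_rotors}, however, is circular in this paper's logical order: \cref{cor:simple_rotors} is stated as an immediate consequence of \cref{th:bivector_generator} together with \cref{th:invariant_decomposition}. The invariant decomposition takes a \emph{bivector} as input, so one must already know that $\Ln R$ is a bivector before one can split it into commuting $2$-blades. Hence the short argument ``$R$ factors into commuting simple bireflections, each with a $2$-blade logarithm, so $\Ln R$ is a sum of bivectors'' presupposes the very theorem you are proving. You flagged this possibility yourself; the answer is that the dependency does run the wrong way, so stick with the adjoint-action proof.
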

\begin{proof}
    A $2k$-reflection $R=(u_1 v_1) (u_2 v_2) \cdots (u_k v_k)$ determines a one-parameter subgroup $\Refl(t) := R^t$, which satisfies the normalization condition $\Refl \widetilde{\Refl} = 1$, where the $t$ dependence of $\Refl$ has been suppressed to improve readability. Differentiation of this normalization condition gives
\begin{equation*}
    \dot{\Refl} \widetilde{\Refl} + \Refl \dot{\widetilde{\Refl}} = 0.
\end{equation*}
But because $\widetilde{\pqty{\Refl \dot{\widetilde{\Refl}}}} = \widetilde{\dot{\widetilde{\Refl}}} \widetilde{\Refl} = \dot{\Refl} \widetilde{\Refl}$, it follows that $\widetilde{\pqty{\Refl \dot{\widetilde{\Refl}}}} = - \pqty{\Refl \dot{\widetilde{\Refl}}}$.
Therefore $\Refl \dot{ \widetilde{\Refl}}$ swaps sign under reversion, and can only contain terms of grade $2 + 4m$ for $m \in \mathbb{N}$.
We will now prove that only $m=0$ is allowed.
Consider the action of $\Refl$ on a vector $u$:
\[ u_\Refl := \Refl u \widetilde{\Refl}. \]
Taking the derivative yields
\begin{align*}
	\dot{u}_\Refl &= \dot{\Refl} u \widetilde{\Refl} + \Refl u \dot{\widetilde{\Refl}} \\
	&= \dot{\Refl}\widetilde{\Refl} u_\Refl - u_\Refl \dot{\Refl}\widetilde{\Refl} \\
	&= 2 \pqty{\dot{\Refl}\widetilde{\Refl}} \times u_\Refl.
\end{align*}
The left hand side of this equation is a vector, and thus so is the right hand side. But the product of an $r$-vector and a $1$-vector results in an $(r-1)$-vector and an $(r+1)$-vector. So the only way for the right hand side to be a $1$-vector, is if $r=2$.
This establishes that
    \begin{equation*}
        \Ln{R} = \dot{\Refl}\widetilde{\Refl}
    \end{equation*}
is a bivector.

\end{proof}

\section{Invariant decomposition}\label{sec:invariant_decomposition}
Any bivector $\B$ in a geometric algebra 
with dimension $n$,
can be decomposed into at most $k=\floor{n/2}$ commuting orthogonal 2-blades, as was previously conjectured or partially derived by \cite{Riesz1993,GeometricCalculus}:
\begin{equation*}
	\B = \bl_1 + \bl_2 + \ldots + \bl_k,
\end{equation*}
where $\bl_i \bl_j = \bl_i \wedge \bl_j$ and thus $\bl_i \cdot \bl_j = \bl_i \times \bl_j = 0$, and each $\bl_i$ squares to a scalar $\bl_i^2 = \lambda_i$.
The $\bl_i$ are found by solving the characteristic polynomial
    \begin{align}
        0 &= \pqty{\bl_1 - \bl_i} \pqty{\bl_2 - \bl_i} \cdots \pqty{\bl_k - \bl_i}
        = \sum_{m=0}^k (-\bl_i)^{k - m} \W_m, \label{eq:blade_poly}
    \end{align}
where 
\begin{align}
	\W_{m} :=& \frac{1}{m!}\expval{\B^{m}}_{2 m} = \frac{1}{m!} \underbrace{\B \wedge \B \wedge \ldots \wedge \B}_{m} \\
	=&
	\sum_{1 \leq i_1 < i_2 < \ldots < i_m \leq k} \bl_{i_1}\bl_{i_2} \cdots \bl_{i_m}.
	\label{eq:Dm}
\end{align}
Defining $r = \floor{k/2}$, \cref{eq:blade_poly} has solutions
    \begin{align}
		\bl_i &= \begin{cases}
		    \dfrac{\lambda_i^{r} \W_0 + \lambda_i^{r-1} \W_2 + \ldots + \W_k}{\lambda_i^{r-1} \W_1 + \lambda_i^{r-2} \W_3 + \ldots + \W_{k-1}} & $k$ \text{ even} \\[1.5em]
		    \dfrac{\lambda_i^{r} \W_1 + \lambda_i^{r-1} \W_3 + \ldots + \W_k}{\lambda_i^{r}  \W_0 + \lambda_i^{r-1} \W_2 + \ldots  + \W_{k-1}} & $k$ \text{ odd}
		\end{cases},
		\label{eq:bivector_split_preview}
	\end{align}
for $\lambda_i$ with algebraic multiplicity of 1, which we shall prove in \cref{th:invariant_decomposition}.

The distinction between even and odd $k$ is only important when $\lambda_i = 0$; when $\lambda_i \neq 0$, $\bl_i^{-1} = \bl_i / \lambda_i$ can be used to show that the two forms are identical.
However, the limit of $\lambda_i \to 0$ is the same for both cases:
    \[ \lim_{\lambda_i \to 0} \bl_i = \frac{W_k}{W_{k-1}}. \]
This provides a quick method to calculate the null bivector in pseudo-Euclidean spaces.
The series in the numerator and denominator terminate after at most $k$ wedge products of $\B$.
In order to calculate the values of $\lambda_i = \bl_i^2$, \cref{eq:bivector_split_preview} is squared and rearranged, giving the polynomialmsp
    \begin{align}
		0 &= \sum_{m=0}^{k} \expval{\W_{m}^2}_0 (- \lambda_i)^{k-m} \label{eq:roots_preview} \\ 
		&= \pqty{\bl_1^2 - \lambda_i} \pqty{\bl_2^2 - \lambda_i} \cdots \pqty{\bl_k^2 - \lambda_i}. \notag
	\end{align}
Thus, the values of $\lambda_i$ are the roots of \cref{eq:roots_preview}, after which \cref{eq:bivector_split_preview} can be used to find the blades $\bl_i$. \Cref{eq:bivector_split_preview} is valid for all $\lambda_i \in \mathbb{C}$, including $\lambda_i = 0$.

We will first give some examples in small algebras to clarify the algorithm.
This section is then concluded by proving the invariant decomposition \cref{eq:bivector_split_preview} in \cref{th:invariant_decomposition}, and proving the polynomial \cref{eq:roots_preview} in \cref{th:roots}. 

\begin{example}[Invariant decomposition in STA, 3DCGA, 3DPGA.]

Consider a non-simple bivector $\B = \bl_1 + \bl_2$ in a geometric algebra with $n < 6$, such as those encountered in Spacetime Algebra (STA) \cite{GA4Ph,STA}, 3DPGA \cite{GunnThesis,PGA4CS}, or 3DCGA \cite{GA4CS}. Solving \cref{eq:blade_poly} gives
    \begin{align*}
        0 &= \pqty{\bl_1 - \bl_i} \pqty{\bl_2 - \bl_i} \\
        &= \bl_1 \bl_2 - \bl_i (\bl_1 + \bl_2) + \lambda_i \notag \\
        &= \tfrac{1}{2} \B \wedge \B - \bl_i \B + \lambda_i, \notag 
    \end{align*}
and thus
    \begin{align}
        \bl_i = \frac{\lambda_i + \tfrac{1}{2} \B \wedge \B}{\B}.
        \label{eq:split_k2}
    \end{align}
The values of $\lambda_i$ can be obtained after squaring the expression for $\bl_i$, resulting in the polynomial
    \begin{align}
        0 &= \lambda_i^2 - \lambda_i \B \cdot \B + \tfrac{1}{4}(\B \wedge \B)^2.
        \label{eq:roots_k2}
    \end{align}
The $\lambda_i$ are the roots of \cref{eq:roots_k2}:
    \begin{equation}
        \lambda_{1,2} = \tfrac{1}{2} \B \cdot \B \pm \tfrac{1}{2} \sqrt{(\B \cdot \B)^2 - (\B \wedge \B)^2}.
        \label{eq:roots_5d}
    \end{equation}
Depending on the sign of the discriminant $\Delta := (\B \cdot \B)^2 - (\B \wedge \B)^2$, this can have either real or complex solutions.

Since
    \begin{equation}
        \B^{-1} = \frac{\bl_1 - \bl_2}{\lambda_1 - \lambda_2} = \frac{\bl_1 - \bl_2}{\sqrt{\Delta}},
    \end{equation}
\cref{eq:split_k2} is valid iff $\Delta \neq 0$. 
The only potentially problematic case occurs when the discriminant $\Delta = 0$, and thus $\lambda_1 = \lambda_2$, as this implies that $\B^{-1}$ does not exist and therefore \cref{eq:split_k2} is not valid.
However, it is easy to verify from $\B = \bl_1 + \bl_2$, that in this case $\B^3 = 4 \lambda_1 \B$,
and thus the exponent of $\B$ is still well behaved. 
Additionally, although \cref{eq:split_k2} can no longer be used when $\Delta = 0$, any 2-blade $\bl_i$ which satisfies $\bl_i \B = \lambda_i + \tfrac{1}{2} \B \wedge \B$ could be used, if a split into blades is still required. 
\end{example}

\begin{example}[Mozzi–Chasles' theorem in 3DPGA]\label{ex:chasles_thm}
The famous Mozzi–Chasles' theorem is the 3DPGA (\GR{3,0,1}) case of the invariant decomposition.
The theorem states that the most general rigid body motion in 3D is a screw motion: a rotation about an axis, either followed or preceded by a translation orthogonal to that axis \cite{mozzi,chasles}. This is depicted in \cref{fig:chasles}.
\begin{figure*}
    \centering
    \includegraphics[width=\textwidth]{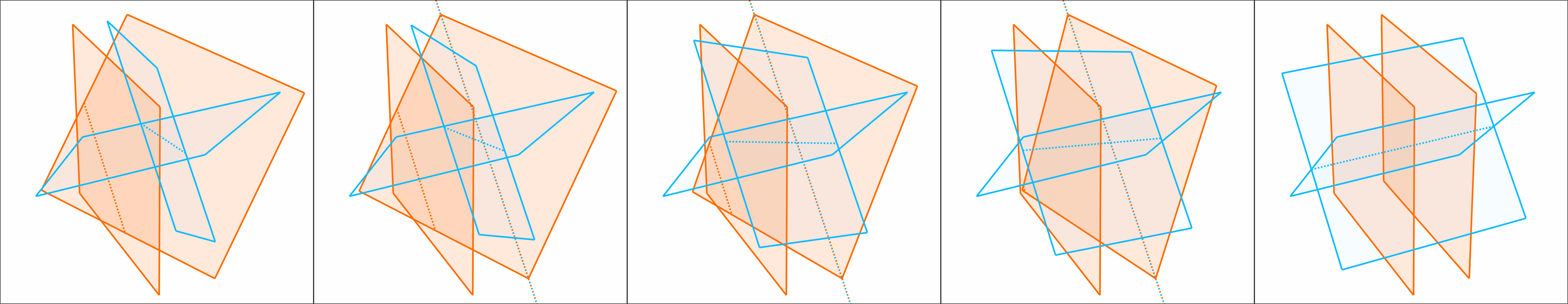}
    \caption{Mozzi-Chasles' theorem: any $4$-reflection in $\Eu{3}$ can be decomposed into two commuting bireflections: a rotation and an orthogonal translation. The $4$-reflection has $3$ gauge degrees of freedom, indicated by the dashed blue, orange, and orange-blue lines. For simplicity the panels display a rotation around the orange-blue line, while the other gauge degrees of freedom have been preset. The final result is a rotation around the blue line, and a translation along it.}
    \label{fig:chasles}
\end{figure*}

Since any handedness preserving transformation $R$ is generated by a bivector, we consider the invariant decomposition of a bivector $\B$.
Because in \GR{3,0,1} $(\B \wedge \B)^2 = 0$, 
the values of $\bl_i^2 = \lambda_{i}$ given by \cref{eq:roots_5d} are
\begin{equation*}
        \lambda_{1,2} = \tfrac{1}{2} \B \cdot \B \pm \tfrac{1}{2} \B \cdot \B.
    \end{equation*}
Consequently, $\lambda_1 = \B \cdot \B \leq 0$, and $\lambda_2 = 0$. Therefore, using \cref{eq:split_k2},
\[ \bl_2 = \frac{\B \wedge \B}{2 \B}, \quad \bl_1 = \B - \bl_2. \]
The generic motion $R = e^\B$ is therefore decomposable into a commuting orthogonal rotation $e^{\bl_1}$ and translation $e^{\bl_2}$:
    \[ R = e^\B = e^{\bl_1}e^{\bl_2} = e^{\bl_1} (1 + \bl_2).  \]
Therefore, the most general handedness preserving isometry in 3D Euclidean space is indeed a screw transformation.
\end{example}

\begin{example}[Invariant decomposition in Spacetime Algebra (STA)]

In the STA (\GR{1,3}), the pseudoscalar $I = \e{1234}$ satisfies $I^2 = -1$, and thus 
    \begin{align*}
        \lambda_{1,2} &= \tfrac{1}{2} \B \cdot \B \pm \tfrac{1}{2} \sqrt{(\B \cdot \B)^2 - (\B \wedge \B)^2} \\
        &= \tfrac{1}{2} \B \cdot \B \pm \tfrac{1}{2} \sqrt{(\B \cdot \B)^2 + \abs{\B \wedge \B^2}}.
    \end{align*}
Consequently, $\sqrt{(\B \cdot \B)^2 - (\B \wedge \B)^2} \geq \B \cdot \B$, and thus $\sign(\lambda_1) = - \sign(\lambda_2)$. Therefore any Lorentz transformation $\Lambda = e^\B$ can be decomposed into a commuting boost $e^{\bl_1}$ and rotation $e^{\bl_2}$ using \cref{eq:split_k2}, as 
    \[ \Lambda = e^\B = e^{\bl_1}e^{\bl_2}. \]
\end{example}

\begin{example}[Seeming counter example]\label{ex:riesz}

An insightful \emph{seeming} counter example to the existence of an orthogonal decomposition in all spaces, due to M. Riesz \cite[Page 170]{Riesz1993}, is the space \GR{2,2}, whose basis vectors satisfy $\e{1}^2 = \e{2}^2 = -\e{3}^2 = -\e{4}^2$. Consider e.g. the bivector
    \[ \B = \tfrac{1}{2} \pqty{\e{12} + \e{14} - \e{23} - \e{34}}, \]
which squares to $\B^2 = \B \wedge \B = - \e{1234}$. From \cref{eq:roots_5d} it follows that $\lambda_{1,2} = \pm \tfrac{i}{2}$. If we decide to exclude complex solutions, then indeed no invariant decomposition can be performed, but if we carry on regardless, we find the complex simple bivectors
    \begin{align*}
        \bl_1 &= \tfrac{1}{4}\bqty{ (1-i)\e{12} + (1+i)\e{14} + (-1-i)\e{23} + (-1+i)\e{34}}, \\
        \bl_2 &= \tfrac{1}{4}\bqty{ (1 + i)\e{12} + (1 - i)\e{14} + (-1 + i)\e{23} + (-1 - i)\e{34}}.
    \end{align*}
These satisfy $\B = \bl_1 + \bl_2$, $\bl_i^2 = \lambda_i$, and $\bl_1 \bl_2 = \bl_1 \wedge \bl_2$. Therefore, all the demands on the invariant decomposition are satisfied. Consequently, bivectors in \GR{2,2} do not pose a counter example, but rather an indication that $\lambda_i$ is allowed to be complex.

\end{example}

\begin{example}[Invariant decomposition when $5 < n \leq 7$.]
In a space of $5 < n \leq 7$, any bivector $\B$ has an invariant decomposition into
    \[ \B = \bl_1 + \bl_2 + \bl_3. \]
The $\bl_i$ are given by
    \begin{equation}
        \bl_i = \frac{\lambda_i \W_1 + \W_3}{\lambda_i \W_0 +\W_2} = \frac{\lambda_i \B + \tfrac{1}{3!} \B \wedge \B \wedge \B}{\lambda_i + \tfrac{1}{2} \B \wedge \B},
    \end{equation}
where the $\lambda_i$ are the roots of
    \begin{align}
        0 &= \lambda_i^3 - \expval{\W_1^2}_0 \lambda_i^2 + \expval{\W_2^2}_0 \lambda_i - \expval{\W_3^2}_0 \\
        &= \lambda_i^3 - \B \cdot \B \; \lambda_i^2 + \tfrac{1}{4} \pqty{\B \wedge \B}^2 \; \lambda_i - \pqty{\tfrac{1}{3!}}^2\pqty{ \B \wedge \B \wedge \B}^2. \notag
    \end{align}
\end{example}
\noindent
The matrix equivalent of this decomposition for \SU{3} was published previously \cite{roelfs2021geometric}.

\begin{theorem}[Invariant decomposition] \label{th:invariant_decomposition}
	Assuming all $\lambda_i \in \mathbb{C}$ are distinct, $\bl_i$ is given by \cref{eq:bivector_split_preview}.
\end{theorem}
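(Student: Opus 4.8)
The plan is to solve the characteristic polynomial \cref{eq:blade_poly} directly for $\bl_i$. The identity $\sum_{m=0}^{k}(-\bl_i)^{k-m}\W_m = 0$ of \cref{eq:blade_poly} holds simply because the $j=i$ factor of $\prod_{j=1}^{k}(\bl_j - \bl_i)$ vanishes, and because the $\bl_j$ mutually commute (so the $\W_m$ are the elementary symmetric functions $\W_m = \sum_{i_1<\cdots<i_m}\bl_{i_1}\cdots\bl_{i_m} = \tfrac{1}{m!}\expval{\B^m}_{2m}$ of \cref{eq:Dm}). So the entire content is to invert this relation.

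The first step is to exploit that $\bl_i^2 = \lambda_i \in \mathbb{C}$ is a scalar, hence $\bl_i^{2\ell} = \lambda_i^{\ell}$ and $\bl_i^{2\ell+1} = \lambda_i^{\ell}\bl_i$. Splitting $\sum_{m=0}^{k}(-\bl_i)^{k-m}\W_m = 0$ according to the parity of $k-m$, the terms with $k-m$ even are scalars times $\W_m$, while those with $k-m$ odd carry a single factor of $\bl_i$; collecting them yields an identity $N_i = \bl_i D_i$, where, depending on whether $k$ is even or odd, $N_i$ and $D_i$ are exactly the numerator and denominator of \cref{eq:bivector_split_preview} (the parity of $k$ is what decides whether the even- or odd-indexed $\W$'s land in $N_i$). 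Writing $P(x) := \prod_j (x + \bl_j) = \sum_m x^{k-m}\W_m$ and choosing a square root $\mu_i$ of $\lambda_i$, this can be phrased uniformly as $P(\pm\mu_i) = N_i \pm \mu_i D_i$. This is routine combinatorial bookkeeping.

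The substance is showing that $D_i$ is invertible, which is precisely where the assumption that $\lambda_i$ has algebraic multiplicity one enters. For $\lambda_i \neq 0$, the elements $\pi_i^{\pm} := \tfrac{1}{2\mu_i}(\mu_i \pm \bl_i)$ are, by $\bl_i^2 = \lambda_i$ alone, complementary orthogonal idempotents ($(\pi_i^{\pm})^2 = \pi_i^{\pm}$, $\pi_i^{+}\pi_i^{-} = 0$, $\pi_i^{+} + \pi_i^{-} = 1$) lying in the commutative algebra generated by the $\bl_j$. From $\bl_i D_i = N_i$ one gets $(\mu_i \pm \bl_i)D_i = P(\pm\mu_i)$, hence $\pi_i^{+} D_i = \pi_i^{+}\prod_{j\neq i}(\mu_i + \bl_j)$ and $\pi_i^{-} D_i = \pi_i^{-}\prod_{j\neq i}(\bl_j - \mu_i)$, so $D_i = \pi_i^{+}\prod_{j\neq i}(\mu_i + \bl_j) + \pi_i^{-}\prod_{j\neq i}(\bl_j - \mu_i)$. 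Each factor with $j\neq i$ is invertible because $(\mu_i + \bl_j)(\mu_i - \bl_j) = \lambda_i - \lambda_j \neq 0$ — this is exactly distinctness — so $D_i^{-1} = \pi_i^{+}\bigl[\prod_{j\neq i}(\mu_i+\bl_j)\bigr]^{-1} + \pi_i^{-}\bigl[\prod_{j\neq i}(\bl_j-\mu_i)\bigr]^{-1}$, and multiplying $N_i = \bl_i D_i$ by $D_i^{-1}$ gives $\bl_i = N_i/D_i$, i.e.\ \cref{eq:bivector_split_preview}. The single degenerate case $\lambda_i = 0$ (at most one $\lambda_i$ vanishes, by distinctness) breaks the definition of $\pi_i^{\pm}$; there one instead passes to the limit $\lambda_i \to 0$, in which both branches of \cref{eq:bivector_split_preview} collapse to $\W_k/\W_{k-1}$, the invertibility of $\W_{k-1}$ following by continuity from nearby non-degenerate configurations. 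The main obstacle is this invertibility step — everything before it is bookkeeping — and it is also what makes transparent why repeated eigenvalues must be excluded and handled separately (with \cref{th:roots} supplying the $\lambda_i$).
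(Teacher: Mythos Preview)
Your argument is correct and takes a somewhat different route from the paper's. Both proofs assume the decomposition $B = \sum_j b_j$ into commuting simple bivectors exists and establish the identity $b_i D_i = N_i$. The paper obtains this through a recursion $b_1 \wedge W_m + \lambda_1^{-1}\, b_1 \cdot W_{m+2} = W_{m+1}$, verified by expanding $b_1 W_m$ in terms of the $b_j$. You instead read it off directly as the parity split of the vanishing characteristic polynomial $\sum_m (-b_i)^{k-m} W_m = 0$, which is more elementary and makes the even/odd dichotomy in \cref{eq:bivector_split_preview} transparent. Your idempotent argument for the invertibility of $D_i$ is a genuine addition: the paper passes from $b_i D_i = N_i$ to $b_i = N_i/D_i$ without justifying that the division is well-defined, whereas your factorisation $D_i = \pi_i^{+}\prod_{j\neq i}(\mu_i + b_j) + \pi_i^{-}\prod_{j\neq i}(b_j - \mu_i)$ makes explicit exactly where distinctness of the $\lambda_i$ enters. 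Conversely, the paper supplies something you omit: it proves $W_i \times W_j = 0$ directly from $B \times B = 0$, showing that the $b_i$ defined by the formula commute independently of the ansatz --- this is what justifies the ansatz a posteriori and feeds into the subsequent corollary.

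One small but genuine gap: your treatment of $\lambda_i = 0$ via ``continuity from nearby non-degenerate configurations'' does not work as stated, since invertibility is not preserved under limits. A direct argument is easy: with $b_i^2 = 0$ one has $W_{k-1} = \prod_{\ell \neq i} b_\ell + b_i\,C$ for some $C$ in the commutative algebra, and since $\bigl(\prod_{\ell \neq i} b_\ell\bigr)^2 = \prod_{\ell \neq i}\lambda_\ell \neq 0$ while $b_i C$ is nilpotent, $W_{k-1}$ is of the form (invertible)$\cdot(1 + \text{nilpotent})$ and hence invertible. (There is also a harmless sign slip in the line $(\mu_i \pm b_i)D_i = P(\pm\mu_i)$: the minus case actually gives $-P(-\mu_i)$, but your subsequent formula for $\pi_i^{-}D_i$ is nevertheless correct.)
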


\begin{proof}
First we make the ansatz that the decomposition of $\B$ into at most $k$ orthogonal 2-blades exists, to find expression \cref{eq:bivector_split_preview} for $\bl_i$, which satisfies $\B = \sum_{i=1}^k \bl_i$ by construction. Then we prove that $\bl_i \times \bl_j = 0$, thereby justifying the ansatz.

In order to prove \cref{th:invariant_decomposition}, we will prove that $\bl_i D_i = N_i$, with $N_i$ and $D_i$ the numerator and denominator of the relevant case of \cref{eq:bivector_split_preview}. Without loss of generality, let us consider $\bl_i = \bl_1$. 
The $\W_m$ satisfy the recursive relationship
    \begin{equation}
        \bl_1 \wedge \W_{m} + \tfrac{1}{\lambda_1} \bl_1 \cdot \W_{m+2} = \W_{m+1},
        \label{eq:Dm_recursion}
    \end{equation}
which is straightforwardly verified using
    \begin{align*}
        \bl_1 \W_m &= \bl_1 \cdot \W_m + \bl_1 \wedge \W_m \\
        &= \lambda_1 \sum_{1 < i_2 < \ldots < i_m} \bl_{i_2} \cdots \bl_{i_m} + \bl_1 \sum_{1 < i_1 < i_2 < \ldots < i_m} \bl_{i_1}\bl_{i_2} \cdots \bl_{i_m}.
        \label{eq:bl_Dm}
    \end{align*}
Careful evaluation of $\lambda_i \to 0$ shows that \cref{eq:Dm_recursion} also holds in this limit.
With the recursive relationship of \cref{eq:Dm_recursion}, and the realization that $\bl_i \wedge \W_{k-1} = \W_k$, the proof of \cref{th:invariant_decomposition} is immediate. 
For odd $k$,
    \begin{align*}
        N_i &= \lambda_i^{r} \W_1 + \lambda_i^{r-1} \W_3 + \lambda_i^{r-2} \W_5 + \ldots + \W_k \\
        D_i &= \lambda_i^{r} \W_0 + \lambda_i^{r-1} \W_2 + \lambda_i^{r-2} \W_4 + \ldots  + \W_{k-1},
    \end{align*}
and thus
    \begin{align*}
        \bl_1 D_1 
        &= \lambda_1^{r} \pqty{\bl_1 \W_0 + \tfrac{1}{\lambda_1} \bl_1 \cdot \W_2} + \lambda_1^{r-1}\pqty{\bl_1 \wedge \W_2 + \tfrac{1}{\lambda_1} \bl_1 \cdot \W_4} \\
        &\quad + \ldots + \bl_1 \wedge \W_{k-1} \\
        &= \lambda_1^{r} \W_1 + \lambda_1^{r-1} \W_3 + \ldots + \W_k = N_1 \notag
    \end{align*}
The proof for even $k$ follows, \emph{mutatis mutandis}, along the same lines.

This proves equation \cref{eq:bivector_split_preview}, assuming the $\bl_i$ commute. 
To prove this last statement, and thereby justify the ansatz, we need only to show that $\W_i \times \W_j = 0$. However, this follows directly if we use that $\B \times \B = 0$, and thus
    \begin{align*}
        0 &= \B^i \times \B^j \\
        &= \Big(\expval{\B^i}_0 + \expval{\B^i}_2 + \ldots + \expval{\B^i}_{2i}\Big) \times \pqty{\expval{\B^i}_0 + \expval{\B^i}_2 + \ldots  + \expval{\B^j}_{2j}}.
    \end{align*}
Since this holds grade by grade, we find for the highest grade term $\expval{\B^i}_{2i} \times \expval{\B^j}_{2j} = 0$, and thus $\W_i \times \W_j = 0$, from which it follows trivially that $\bl_i \times \bl_j = 0$.
\end{proof}

\begin{corollary} \label{cor:simple_rotors}
    Any $2k$-reflection $R = (u_1 v_2) (u_2 v_2) \cdots (u_k v_k)$ can be factored into commuting bireflections:
        \begin{equation*}
            R = (u_1' v_2') (u_2' v_2') \cdots (u_k' v_k'),
        \end{equation*}
    such that $(u_i' v_i') \times (u_j' v_j') = 0$. This follows immediately because any $2k$-reflection is generated by a bivector $\Ln R$ (\cref{th:bivector_generator}), and any bivector can be split into commuting 2-blades (\cref{th:invariant_decomposition}).
\end{corollary}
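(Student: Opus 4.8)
The plan is to reduce the corollary to the two theorems just established. Starting from a $2k$-reflection $R = (u_1 v_1)(u_2 v_2)\cdots(u_k v_k)$, I would first apply \cref{th:bivector_generator} to write $R = e^{\B}$ with $\B = \Ln R$ a bivector of \GR{pqr}. Next I would feed $\B$ into \cref{th:invariant_decomposition}: solving the root polynomial \cref{eq:roots_preview} for the scalars $\lambda_i = \bl_i^2$ and substituting into \cref{eq:bivector_split_preview} yields a decomposition $\B = \bl_1 + \bl_2 + \cdots + \bl_k$ into mutually commuting 2-blades, $\bl_i \times \bl_j = 0$.

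The second half is to re-exponentiate. Since the $\bl_i$ commute, $R = e^{\B} = e^{\bl_1} e^{\bl_2} \cdots e^{\bl_k}$. Each $\bl_i$ is a 2-blade, so $\bl_i^2$ is a scalar and the construction of \cref{sec:continuous_transformations} applies verbatim: choosing any vector $u_i'$ lying in the plane of $\bl_i$, normalized to $(u_i')^2 = \pm 1$, the element $v_i' := (u_i')^{-1} e^{\bl_i}$ is again a vector with $(v_i')^2 = (u_i')^2$, and $e^{\bl_i} = u_i' v_i'$ with $\bl_i \propto u_i' \wedge v_i'$. Thus each factor is a genuine bireflection, and commuting exponents give commuting exponentials, so $(u_i' v_i') \times (u_j' v_j') = e^{\bl_i} \times e^{\bl_j} = 0$, which is precisely the claim.

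The main obstacle is that \cref{th:invariant_decomposition} assumes the $\lambda_i$ are pairwise distinct; when $\B$ has a repeated eigenvalue the closed form \cref{eq:bivector_split_preview} degenerates (its denominator vanishes, exactly as in the $\Delta = 0$ case discussed above). I would close this gap by a continuity argument: perturb $\B \mapsto \B + \epsilon\,\B'$ to make all eigenvalues simple, obtain commuting 2-blades $\bl_i(\epsilon)$ and hence commuting bireflections $u_i'(\epsilon)\,v_i'(\epsilon)$, and let $\epsilon \to 0$ — the commutator conditions $(u_i'v_i')\times(u_j'v_j') = 0$ are closed, so they survive the limit. Alternatively one can argue directly that a coincident eigenvalue forces the corresponding partial sum $\bl_i + \bl_j$ to satisfy a low-degree polynomial, making the associated sub-factor of $R$ a well-defined bireflection without recourse to limits. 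A final, routine point worth stating is that the bireflection construction still works when $\lambda_i = 0$ (a null plane, with $e^{\bl_i} = 1 + \bl_i$) and when $\lambda_i \in \mathbb{C}\setminus\mathbb{R}$ (the factor then has complex components, as already seen in \cref{ex:riesz}); neither case obstructs the argument.
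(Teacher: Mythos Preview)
Your argument is correct and follows exactly the route the paper takes: invoke \cref{th:bivector_generator} to obtain a bivector generator, apply \cref{th:invariant_decomposition} to split it into commuting $2$-blades, and re-exponentiate into commuting simple rotors. You are in fact more careful than the paper itself, which states the corollary without addressing the repeated-eigenvalue degeneracy or spelling out the vector factorization $e^{\bl_i} = u_i' v_i'$; your continuity patch for coincident $\lambda_i$ is a reasonable way to close that gap.
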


\begin{corollary} \label{cor:commuting_bivectors}
    For any value of $\lambda_i$, the bivector $\bl_i$ calculated using \cref{th:invariant_decomposition} will commute with $\B$, or with any other $\bl_j$. Only for specific values of $\lambda_i$ is $\bl_i$ also simple, which leads to \cref{th:roots}.
\end{corollary}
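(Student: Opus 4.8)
The plan is to treat the two assertions of the corollary separately, and neither requires re‑opening the decomposition itself. For the commuting claim, I would write $\bl_i = N_i D_i^{-1}$, with $N_i$ and $D_i$ the numerator and denominator of \cref{eq:bivector_split_preview}; both are polynomials in the scalar $\lambda_i$ whose coefficients are drawn from the family $\{\W_0,\W_1,\dots,\W_k\}$. The proof of \cref{th:invariant_decomposition} already established $\W_i\times\W_j=0$ for all $i,j$, and $\W_1=\B$, so all the $\W_m$ — and therefore $N_i$, $D_i$, and the analogous $N_j$, $D_j$ — lie in a single commutative subalgebra. Since $xD=Dx$ forces $xD^{-1}=D^{-1}x$ in any associative algebra, inverting $D_i$ keeps everything inside (the commutant of) that subalgebra, so $\bl_i=N_iD_i^{-1}$ commutes with $\B$ and with every $\bl_j$. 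Crucially, this step uses nothing about $\lambda_i$ being a root, nor any grade information about $\bl_i$ — hence it holds \emph{for any value} of $\lambda_i$.

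For the simplicity claim the key is a scalar‑variable identity. Substituting a scalar $x$ for $\bl_i$ in the characteristic polynomial and regrouping even and odd powers of $x$ with $\lambda:=x^2$, while using $\W_m=\sum_{i_1<\dots<i_m}\bl_{i_1}\cdots\bl_{i_m}$ from \cref{eq:Dm}, gives for both parities of $k$:
\[
 \prod_{j=1}^{k}(\bl_j-x)=N(\lambda)-x\,D(\lambda),\qquad \prod_{j=1}^{k}(\bl_j+x)=N(\lambda)+x\,D(\lambda),
\]
with $N,D$ the numerator and denominator of \cref{eq:bivector_split_preview} evaluated at the free scalar $\lambda$. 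Multiplying the two and using $ND=DN$ yields $N(\lambda)^2-\lambda\,D(\lambda)^2=\prod_j(\bl_j^2-\lambda)=\prod_j(\lambda_j-\lambda)$, so at $\lambda=\lambda_i$, after dividing by $D_i^2$,
\[
 \bl_i^2=\lambda_i+\bqty{\prod_{j=1}^{k}(\lambda_j-\lambda_i)}\,D_i^{-2}.
\]
Because $D_i$ is a genuine (generically mixed‑grade) multivector, $D_i^{-2}$ is not a scalar, so $\bl_i^2\in\mathbb{C}$ — equivalently $\bl_i$ is a $2$‑blade, i.e. simple — precisely when $\prod_j(\lambda_j-\lambda_i)=0$, that is, when $\lambda_i$ coincides with one of the $\lambda_j$. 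Finally, expanding $\prod_j(\lambda_j-\lambda)=\sum_m(-\lambda)^{k-m}e_m(\lambda_1,\dots,\lambda_k)$ and noting that the off‑diagonal terms of $\W_m^2$ all carry positive grade (the $\bl_j$ being orthogonal commuting blades with $\bl_j^2=\lambda_j$), so that $e_m(\lambda_1,\dots,\lambda_k)=\expval{\W_m^2}_0$, identifies these distinguished values of $\lambda_i$ as exactly the roots of \cref{eq:roots_preview} — which is the content of \cref{th:roots}.

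The routine‑but‑delicate part is the parity bookkeeping that turns the characteristic polynomial into $N(\lambda)-x\,D(\lambda)$: for even $k$ the even‑$m$ terms assemble $N$ and the odd‑$m$ terms assemble $-xD$, while for odd $k$ the roles swap, and the signs $(-1)^{k-m}$ must be tracked carefully. The genuinely essential point — and the reason there is no shortcut — is that $D_i^{-2}$ is not a scalar; were $D_i$ scalar, every $\bl_i$ produced by \cref{eq:bivector_split_preview} would already be simple and \cref{th:roots} would be superfluous. As in \cref{th:invariant_decomposition}, the degenerate case of a repeated $\lambda_i$, where $D_i$ is non‑invertible and \cref{eq:bivector_split_preview} breaks down, is excluded by hypothesis and handled separately, exactly as in the $\Delta=0$ discussion following \cref{eq:roots_k2}.
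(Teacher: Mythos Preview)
Your commutation argument is exactly the paper's: the proof of \cref{th:invariant_decomposition} already establishes $\W_i \times \W_j = 0$, and since $N_i$, $D_i$ (and hence $\bl_i = N_i D_i^{-1}$) live in the commutative subalgebra generated by the $\W_m$, commutation with $\B = \W_1$ and with every other $\bl_j$ follows for arbitrary $\lambda_i$, with no use of the root condition.

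For the simplicity claim the paper offers no argument --- the corollary is stated as a bare observation pointing forward to \cref{th:roots}, whose separate two-line proof simply reads off $\expval{\W_m^2}_0 = e_m(\lambda_1,\dots,\lambda_k)$ from the decomposition. Your route is genuinely different and more informative: the factorisation $\prod_j(\bl_j \mp x) = N(\lambda) \mp x\,D(\lambda)$ and the product of its two sign variants yield the explicit remainder formula $\bl_i^2 = \lambda_i + \bqty{\prod_j(\lambda_j - \lambda_i)}\,D_i^{-2}$, which the paper never writes down. This makes transparent \emph{why} simplicity singles out the roots --- the obstruction to $\bl_i^2$ being scalar is a scalar multiple of the generically non-scalar $D_i^{-2}$ --- rather than merely asserting \emph{that} it does. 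The cost is the parity bookkeeping you flag, plus the caveat that ``$D_i^{-2}$ is not a scalar'' is argued only generically; but since the corollary is itself phrased generically (``only for specific values''), that matches the required strength. Your closing identification of $\prod_j(\lambda_j - \lambda)$ with $\sum_m \expval{\W_m^2}_0\,(-\lambda)^{k-m}$ is then precisely the paper's proof of \cref{th:roots}, so you have in effect absorbed that theorem into the corollary.
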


\begin{theorem}\label{th:roots}
	The $\lambda_i = \bl_i^2$ are the roots of \cref{eq:roots_preview} \cite[Eq. (4.14)]{GeometricCalculus}.
\end{theorem}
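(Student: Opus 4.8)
The plan is to reduce the statement to a classical identity for elementary symmetric polynomials. By \cref{th:invariant_decomposition} (or, strictly, by the ansatz it justifies) we may write $\B = \bl_1 + \bl_2 + \ldots + \bl_k$ with the $\bl_i$ mutually commuting orthogonal $2$-blades, so that $\bl_i \bl_j = \bl_i \wedge \bl_j$ for $i \neq j$ and $\bl_i^2 = \lambda_i \in \mathbb{C}$. Writing $\bl_S := \prod_{i \in S} \bl_i$ for $S \subseteq \{1, \ldots, k\}$, formula \cref{eq:Dm} states that $\W_m = \sum_{\abs{S} = m} \bl_S$. Since
\[ \prod_{j=1}^{k} \pqty{\lambda_j - \lambda_i} = \sum_{m=0}^{k} e_m(\lambda_1, \ldots, \lambda_k)\, (-\lambda_i)^{k-m}, \]
where $e_m$ is the $m$-th elementary symmetric polynomial, the identity \cref{eq:roots_preview} is equivalent to the single claim
\[ \expval{\W_m^2}_0 = e_m(\lambda_1, \ldots, \lambda_k) = \sum_{\abs{S} = m} \prod_{i \in S} \lambda_i, \]
and this is what I would prove.

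Next I would compute the scalar part of $\W_m^2 = \sum_{S, T} \bl_S \bl_T$, both sums over $m$-element subsets. Each $\bl_i$ has even grade and the $\bl_i$ are pairwise orthogonal, so any two of them commute, and a product of distinct ones is their wedge, hence a blade whose grade is twice the number of factors (or zero). Collecting, for each $i \in S \cap T$, the two copies of $\bl_i$ into the scalar $\bl_i^2 = \lambda_i$ gives
\[ \bl_S \bl_T = \Big( \prod_{i \in S \cap T} \lambda_i \Big)\, \bl_{S \setminus T}\, \bl_{T \setminus S}, \]
where $\bl_{S \setminus T} \bl_{T \setminus S}$ is a blade of grade $2\abs{S \setminus T} + 2\abs{T \setminus S}$. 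Its scalar part therefore vanishes unless $S = T$, in which case $\bl_S \bl_S = \prod_{i \in S} \lambda_i$ is purely scalar. Hence $\expval{\bl_S \bl_T}_0 = 0$ for $S \neq T$, and summing the diagonal contributions gives exactly $\expval{\W_m^2}_0 = \sum_{\abs{S} = m} \prod_{i \in S} \lambda_i$. (This also makes clear that $\expval{\W_m^2}_0$ is real even though the $\lambda_i$ may be complex, since they occur in conjugate pairs.)

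Substituting back into \cref{eq:roots_preview} yields $\sum_{m=0}^k \expval{\W_m^2}_0 (-\lambda_i)^{k-m} = \prod_{j=1}^{k}\pqty{\bl_j^2 - \lambda_i}$, which vanishes precisely when $\lambda_i \in \{\bl_1^2, \ldots, \bl_k^2\}$; so the $\lambda_i = \bl_i^2$ are exactly its roots, counted with multiplicity when some coincide (cf. \cref{cor:commuting_bivectors}). I do not expect a serious obstacle: the only step requiring care is the grade bookkeeping for $\bl_{S \setminus T} \bl_{T \setminus S}$ --- that a nonempty product of distinct pairwise-orthogonal $2$-blades never contributes a scalar --- and this orthogonality is exactly what was already secured in \cref{th:invariant_decomposition}, so the real work is done there. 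Note also that nothing above uses invertibility of the $\bl_i$, so the null cases $\lambda_i = 0$ are covered without change; and a choice-free variant is available, since the proof of \cref{th:invariant_decomposition} already establishes $\W_i \times \W_j = 0$, from which the same orthogonality of the $\bl_S$ with respect to the scalar-part bilinear form can be extracted directly.
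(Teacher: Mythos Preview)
Your proposal is correct and follows essentially the same route as the paper: both arguments reduce \cref{eq:roots_preview} to the identity $\expval{\W_m^2}_0 = \sum_{\abs{S}=m}\prod_{i\in S}\lambda_i$, obtained by squaring \cref{eq:Dm} and observing that only the diagonal terms $\bl_S\bl_S$ survive in the scalar part. You are simply more explicit than the paper about why the cross terms $\expval{\bl_S\bl_T}_0$ vanish for $S\neq T$.
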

\begin{proof}
    Squaring \cref{eq:Dm} directly gives
    \begin{align*}
        \expval{\W_{m}^2}_0 &= \sum_{1 \leq i_1 < i_2 < \ldots < i_m \leq k} \bl_{i_1}^2 \bl_{i_2}^2 \cdots \bl_{i_m}^2 \\
        &= \sum_{1 \leq i_1 < i_2 < \ldots < i_m \leq k} \lambda_{i_1} \lambda_{i_2} \cdots \lambda_{i_m},
    \end{align*}
    and thus the equality \cref{eq:roots_preview} follows immediately.
\end{proof}

\noindent
To perform the invariant decomposition, the roots $\lambda_i$ are first determined using \cref{th:roots}, after which \cref{th:invariant_decomposition} can be used to find the corresponding  2-blades $\bl_i$.
If precisely of the $\lambda_i$ equals $0$, then $\bl_i$ can alternatively be obtained as
\begin{equation}
	\bl_i = \B - \sum_{j \neq i} \bl_j.
\end{equation}
This could simplify any implementation, as the distinction between even and odd $k$ can be dropped.
The method presented here extends previously published methods \cite{GA4Ph,GunnThesis,DorstDecomposition} to geometric algebras of  arbitrary metric and dimension, for all unique $\lambda_i \in \mathbb{C}$. 

\section{Exponential of a bivector}\label{sec:exponential}

Using the invariant decomposition of \cref{th:invariant_decomposition}, the exponential of a bivector $\B$ follows straightforwardly after performing the decomposition of $\B$ into~$\{ \bl_i \}_{i=1}^k$,
using which a group element $R = \exp\bqty{\B}$ can be written as
\begin{align}
	R &= e^{\B} = e^{\bl_1} e^{\bl_2} \cdots e^{\bl_k} \label{eq:group_element_mat} \\
	&= \prod_{i=1}^k \bqty{\co(\bl_i) + \si(\bl_i)}.
\end{align}
where $\co(\bl_i)$ and $\si(\bl_i)$ were previously defined in \cref{eq:generalized_cos,eq:generalized_sin}.
It follows that the ${\bl}_i$ span a commuting orthogonal basis for $R$: 
    \[ \{ 1, \; {\bl}_i, \; {\bl}_{ij}, \; \ldots, \; {\bl}_{12\ldots k} \}, \]
where $\bl_{ij} := \bl_i \wedge \bl_j$. These basis elements satisfy
    \[ \bl_i \bl_j = \bl_j \bl_i =  g_{ij} + \bl_i \wedge \bl_j, \]
where $g_{ij} = \bl_i \cdot \bl_j = \text{diag}(\lambda_1, \lambda_2, \dots, \lambda_k)$ is the effective metric.
Any element in this basis is invariant under each of the $R_i = e^{\bl_i}$, a special case is $\B$ itself.

With this observation in mind, we return to the gauge degrees of freedom. In \cref{E2} we have intuitively seen that a $2k$-reflection has $2k-1$ gauge degrees of freedom.
However, the commutativity imposed by the invariant decomposition restricts the number of degrees of freedom to $k$, corresponding to the number of commuting bireflections it contains. Given an even $2k$-reflection $R = R_1 R_2 \cdots R_k$, each bireflection $R_i = u_i v_i$ is specified by the reflections $u_i$ and $v_i$. However, these are not unique: $u_i$ and $v_i$ can be freely rotated around their intersection.
These rotations are determined by the one parameter subgroup $\Refl_i(\theta_i) = R_i^{\theta_i}$, since any
    \begin{align*}
        u_i' = \Refl_i[u_i], \quad v_i' = \Refl_i[v_i],
    \end{align*}
define the same bireflection $R_i$:
    \[ u_i' v_i' = R_i^{\theta_i} u_i R_i^{-\theta_i} R_i^{\theta_i} v_i R_i^{-\theta_i} = u_i v_i = R_i. \]
The other bireflections $R_j = u_j v_j$ with $j \neq i$ do not share this gauge degree of freedom because $u_j \times R_i = v_j \times R_i = 0$, and thus
    \begin{alignat*}{3}
        u_j &= \Refl_i[u_j] =  R_i^{\theta_i} u_j R_i^{-\theta_i} &&= u_j\\
        v_j &= \Refl_i[v_j] =  R_i^{\theta_i} v_j R_i^{-\theta_i} &&= v_j.
    \end{alignat*}
A $2k$-reflection $R$ therefore has only $k$ gauge degrees of freedom $\{ \theta_i \}_{i=1}^k$.
As a result, the $k$ parameter gauge group of the reflections $\{ u_i, v_i \}_{i=1}^k$ is
\begin{equation}
	R(\theta_1, \theta_2, \ldots, \theta_k) = e^{\theta_1 \bl_1}e^{\theta_2 \bl_2}\cdots e^{\theta_k \bl_k}.
\end{equation}

\section{Tangent decomposition}\label{sec:tangent}
To define the tangent function, we first define the generalized sine and cosine series as
\begin{align}
    \si(\B) &= \tfrac{1}{2} \pqty{e^\B - e^{-\B}} = \tfrac{1}{2} \pqty{R - \widetilde{R}} \\
    \co(\B) &= \tfrac{1}{2} \pqty{e^\B + e^{-\B}} = \tfrac{1}{2} \pqty{R + \widetilde{R}}.
\end{align}
The generalization of the tangent function is then defined in terms of $\si(\B)$ and $\co(\B)$ as
    \begin{equation}
        \ta(\B) := \frac{\si(\B)}{\co(\B)} 
        = \frac{R - \widetilde{R}}{R + \widetilde{R}}.
    \end{equation}
We additionally define the bivector
    \[ \Tbl := \frac{\expval{R}_2}{\expval{R}} = \sum_{j=1}^k \ta(\bl_j), \]
where the last equality follows from
    \begin{align*}
        \expval{R} &= \prod_{i=1}^k \co(\bl_i), \quad \expval{R}_2 = \sum_{i=1}^k \si(\bl_i) \prod_{j\neq i} \co(\bl_j),
    \end{align*}
where the $\bl_i$ are given by the invariant decomposition of $\B$.
We would like to find the simple bivectors $\ta(\bl_i)$, 
which are obtained by applying the invariant decomposition \cref{sec:invariant_decomposition} to the bivector $\Tbl$.
However, \cref{th:trig} allows the quantities $\W_m$ to be expressed using various grades of $R$:
    \begin{align*}
    	\W_{m} =& \frac{1}{m!}\expval{R} \expval{\Tbl^{m}}_{2 m} = \expval{R}_{2m},
    \end{align*}
where the limit $\expval{R} \to 0$ is well-behaved.
Therefore the invariant decomposition has the solutions
    \begin{align}
         \ta(\bl_i) = \begin{cases}
		    \dfrac{\lambda_i^{r} \expval{R} + \lambda_i^{r-1} \expval{R}_4 + \ldots + \expval{R}_{2k}}{\lambda_i^{r-1} \expval{R}_2 + \lambda_i^{r-2} \expval{R}_6 + \ldots + \expval{R}_{2k-2}} & $k$ \text{ even} \\[1.5em]
		    \dfrac{\lambda_i^{r} \expval{R}_2 + \lambda_i^{r-1} \expval{R}_6 + \ldots + \expval{R}_{2k}}{\lambda_i^{r} \expval{R} + \lambda_i^{r-1} \expval{R}_4 + \ldots  + \expval{R}_{2k-2}} & $k$ \text{ odd}
		\end{cases},
		\label{eq:tan_split}
    \end{align}
where $r = \floor{k/2}$. 
To find the values of $\lambda_i$, \cref{eq:roots_preview} becomes
    \begin{align}
        0 &= \sum_{m=0}^{k} \expval{\expval{R}_{2m}^2} (- \lambda_i)^{k-m}.
    \end{align}
The interesting feature of this formulation of the invariant decomposition is that it uses all the grades of the rotor, making for an exception free experience.
We will now give some examples of the tangent decomposition, after which we conclude this section with \cref{th:trig}.

\begin{example}[Tangent decomposition in STA, 3DCGA, 3DPGA, etc.]\label{ex:STA_tangent_decomposition}
A rotor $R$ in a space with $n < 6$ has a tangent decomposition given by
    \begin{equation*}
        \ta(\bl_i) = \frac{\lambda_i \expval{R} + \expval{R}_4}{\expval{R}_2}
    \end{equation*}
where the $\lambda_i$ are the roots of
\begin{equation*}
    \lambda_i^2 \expval{R}^2 - \lambda_i \expval{R}_2 \cdot \expval{R}_2 +  \expval{R}_4^2 = 0,
\end{equation*}
which are
    \begin{align*}
        \lambda_{1,2} &= \frac{\expval{R}_2 \cdot \expval{R}_2 \pm \sqrt{\pqty{\expval{R}_2 \cdot \expval{R}_2}^2 - 4 \expval{R}^2 \expval{R}_4^2}}{2 \expval{R}^2}
    \end{align*}
\end{example}

\begin{theorem}\label{th:trig}
Given a non-simple rotor $R$ defined by \cref{eq:group_element_mat}, for $0 \leq m \leq k$, the grade $2m$ part $\expval{R}_{2m}$, is given by
\begin{align}
    \expval{R}_{2m} &= \frac{\expval{\expval{R}_2^m}_{2m}}{m! \expval{R}_0^{m-1}} \label{eq:trig_iden} \\
    &= \frac{1}{m! \expval{R}_0^{m-1}} \underbrace{\expval{R}_2 \wedge \ldots \wedge \expval{R}_2}_{m} \notag
\end{align}
\end{theorem}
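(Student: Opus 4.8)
The plan is to expand $R$ explicitly using the invariant decomposition and then read both sides of \cref{eq:trig_iden} off this expansion. By \cref{th:invariant_decomposition} we may write $\B = \bl_1 + \cdots + \bl_k$ with the $\bl_i$ mutually commuting orthogonal $2$-blades, so that by the generalized Euler formula
\begin{equation*}
    R = e^{\B} = \prod_{i=1}^k \pqty{c_i + s_i}, \qquad c_i := \co(\bl_i), \quad s_i := \si(\bl_i),
\end{equation*}
where each $c_i$ is a scalar and each $s_i \propto \bl_i$ is a $2$-blade; the $s_i$ inherit from the $\bl_i$ that they are mutually commuting and orthogonal, so $s_i \wedge s_i = 0$ and $s_{i_1} \cdots s_{i_m} = s_{i_1} \wedge \cdots \wedge s_{i_m}$ for distinct indices. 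Expanding the product over subsets $S \subseteq \{1,\ldots,k\}$ gives $R = \sum_{S} \pqty{\prod_{i \notin S} c_i} \pqty{\bigwedge_{i \in S} s_i}$, hence $\expval{R}_{2m} = \sum_{\abs{S} = m} \pqty{\prod_{i \notin S} c_i} \bigwedge_{i \in S} s_i$. In particular $\expval{R}_0 = \prod_{i=1}^k c_i$ and $\expval{R}_2 = \sum_{i=1}^k \pqty{\prod_{j \neq i} c_j} s_i$, which recovers the two formulas quoted just before the theorem.

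Next I would compute the $m$-fold wedge power of $\expval{R}_2$. Expanding multinomially and discarding every term with a repeated index (these vanish since $s_i \wedge s_i = 0$), and using that the $s_i$ commute under $\wedge$ so that each unordered $m$-subset $S$ contributes with multiplicity $m!$,
\begin{equation*}
    \underbrace{\expval{R}_2 \wedge \cdots \wedge \expval{R}_2}_{m} = m! \sum_{\abs{S} = m} \pqty{\prod_{l \in S} \prod_{j \neq l} c_j} \bigwedge_{l \in S} s_l.
\end{equation*}
For fixed $S$ with $\abs{S}=m$ the coefficient simplifies: since $\prod_{j \neq l} c_j = \expval{R}_0 / c_l$ we get $\prod_{l \in S}\prod_{j\neq l} c_j = \expval{R}_0^{m} / \prod_{l \in S} c_l = \expval{R}_0^{m-1} \prod_{l \notin S} c_l$, the last step using $\prod_{l \in S} c_l \cdot \prod_{l \notin S} c_l = \expval{R}_0$. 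Substituting back and comparing with the expression for $\expval{R}_{2m}$ above yields $\expval{R}_2^{\wedge m} = m!\, \expval{R}_0^{m-1} \expval{R}_{2m}$, which is \cref{eq:trig_iden}; the identification $\expval{\expval{R}_2^m}_{2m} = \expval{R}_2^{\wedge m}$ is simply the fact that the top grade of a product of bivectors is their wedge, exactly as in \cref{eq:Dm}.

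The only subtlety is the degenerate case $\expval{R}_0 = 0$ (some $c_i$ vanishing), where the quotient in \cref{eq:trig_iden} is not literally defined for $m \geq 2$. I would dispatch this by noting that the equivalent relation $\expval{R}_2^{\wedge m} = m!\,\expval{R}_0^{m-1}\expval{R}_{2m}$ has no denominators: for $m \geq 1$ both sides are polynomial in the $c_i$ and the $s_i$, so the identity, established above on the dense open set $\expval{R}_0 \neq 0$, holds everywhere by continuity. (The cases $m=0,1$ are trivially $\expval{R}_0 = \expval{R}_0$ and $\expval{R}_2 = \expval{R}_2$.) I expect the main — indeed the only — obstacle to be the combinatorial bookkeeping in the multinomial expansion of $\expval{R}_2^{\wedge m}$ together with the coefficient simplification $\prod_{l\in S}\prod_{j\neq l}c_j = \expval{R}_0^{m-1}\prod_{l\notin S}c_l$; once the subset expansion of $R$ is in hand the rest is routine.
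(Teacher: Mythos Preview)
Your proof is correct. Both you and the paper start from the explicit subset expansion of $R = \prod_i (c_i + s_i)$, but from there the arguments diverge. The paper establishes the one--step recursion $\expval{R}_2 \wedge \expval{R}_{2(m-1)} = m\,\expval{R}_0\,\expval{R}_{2m}$ by direct computation and then unwinds it by induction on $m$. You instead expand $\expval{R}_2^{\wedge m}$ multinomially in one shot and reduce the coefficient via the counting identity $\prod_{l\in S}\prod_{j\neq l}c_j = \expval{R}_0^{m-1}\prod_{l\notin S}c_l$ (which, incidentally, is a polynomial identity---each $c_j$ with $j\notin S$ appears $m$ times and each with $j\in S$ appears $m-1$ times---so your division-based derivation of it is really just a convenient heuristic). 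Your route is more explicit and self-contained, and it cleanly disposes of the degenerate case $\expval{R}_0 = 0$ by continuity, which the paper does not address; the paper's recursive argument is terser once one grants the induction step. Underneath, both are the same combinatorics of $m$-subsets of $\{1,\ldots,k\}$.
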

\begin{proof}
    We will need to use
    \begin{alignat*}{3}
        \expval{R}_0 &= \prod_{i=1}^k \co(\bl_i), \quad & \expval{R}_{2k} &= \prod_{i=1}^k \si(\bl_i), \\
        \expval{R}_2 &= \sum_{i=1}^k \si(\bl_i) \prod_{j\neq i} \co(\bl_j), \quad  
        & \expval{R}_{2(k-1)} &= \sum_{i=1}^k \co(\bl_i) \prod_{j\neq i} \si(\bl_j).
    \end{alignat*}
    By direct computation we find the recursive relationship
    \begin{align*}
        \expval{R}_2 \wedge \expval{R}_{2(k-1)} &= k \prod_{i=1}^k \si(\bl_i)\co(\bl_i) \\
        &= k \expval{R}_0 \expval{R}_{2k},
    \end{align*}
    and therefore by induction we find
    \begin{equation*}
        \expval{R}_{2k} = \frac{\expval{R}_2 \wedge \expval{R}_{2(k-1)}}{k \expval{R}_0} = \frac{\expval{\expval{R}_2^k}_{2k}}{k! \expval{R}_0^{k-1}}.
    \end{equation*}
\end{proof}

\section{Factorization  \& Logarithm of rotors}\label{sec:factorization}

We will demonstrate that any $r$-reflection $R$ can be factored into $\ceil{\frac r 2}$ mutually commuting factors.
For a $2k$-reflection these are $k$ bireflections, for a $(2k+1)$-reflection these are $k$ mutually commuting bireflections and a reflection. 
Once the factorization of a $2k$-reflection into $k$ mutually commuting bireflections has been performed, the logarithm follows immediately.

\subsection{Factorization of a \texorpdfstring{$2k$}{2k}-reflection into bireflections}\label{sec:factor_even}
Given a $2k$-reflection $R$, we want to find the factorization 
    \[ R = R_1 R_2 \cdots R_k \]
into $k$ Euler's formulas $R_i = e^{\bl_i}$.
The tangent decomposition of $R$ yields the simple bivectors $\ta(\bl_i)$, which can be used to find the $R_i$ up to sign, since
    \begin{align}
        R_i &= 
        \overline{\bqty{1 + \ta(\bl_i)}} \label{eq:bireflection_factor} \\
        &= \abs{\co(\bl_i)} \, \bqty{1 + \ta(\bl_i)} \notag \\
        &= \abs{\co(\bl_i)} + \sign\bqty{\co(\bl_i)} \si(\bl_i) \notag 
    \end{align}
To preserve the distinction between $\pm R$, the first $k-1$ bireflections are calculated using \cref{eq:bireflection_factor}, after which the final bireflection follows from
    \begin{equation}
        R_k~=~\widetilde{R}_1 \cdots \widetilde{R}_{k-1} R.
    \end{equation}

\begin{example}[3DPGA]
    In the particular case of 3DPGA the factorization of $R$ into a rotation $R_1$ and a translation $R_2$ can be greatly simplified, because the tangent decomposition of \cref{ex:STA_tangent_decomposition} always yields $\lambda_2 = 0$ and thus
    \begin{equation*}
        \ta(\bl_2) = \frac{\expval{R}_4}{ \expval{R}_2}, \quad \ta(\bl_1) = \expval{R}_2 - \ta(\bl_2).
    \end{equation*}
    Therefore, the factors $R_1$ and $R_2$ are
    \begin{equation*}
        R_2 = 1 + \ta(\bl_2), \quad R_1 = R / R_2.
    \end{equation*}
    If $\expval{R}_4 = 0$, the rotor $R$ is simple and the factorisation trivial.
\end{example}

\begin{example}[Wigner rotation]
    The multiplication of two non-collinear boosts produces a Wigner rotation \cite{10.2307/1968551}, which can be decomposed into a mutually commuting orthogonal boost and rotation using \cref{eq:bireflection_factor}. Let $R_1 = e^{\kbl_1}$, and $R_2 = e^{\kbl_2}$ be the two original boosts. These boosts do not commute, and so $R = e^{\kbl_1} e^{\kbl_2} \neq e^{\kbl_1 + \kbl_2}$ but typically requires the Baker–Campbell–Hausdorff formula. However, the tangent decomposition allows the product $R$ to be decomposed in a mutually commuting orthogonal boost $e^{\bl_1}$ and rotation $e^{\bl_2}$. Grade by grade we find
        \begin{align*}
            \expval{R} &= \co\pqty{\kbl_1} \co\pqty{\kbl_2} + \si\pqty{\kbl_1} \cdot \si\pqty{\kbl_2}\\
            \expval{R}_2 &= \co\pqty{\kbl_1} \si\pqty{\kbl_2} + \si\pqty{\kbl_1} \co\pqty{\kbl_2} \\
            \expval{R}_4 &= \si\pqty{\kbl_1} \wedge \si\pqty{\kbl_2}
        \end{align*}
    The techniques of \cref{ex:STA_tangent_decomposition} will yield $\ta(\bl_1)$ and $\ta(\bl_2)$, after which 
    \[ R_1 = \overline{1 + \ta(\bl_1)}, \qquad R_2 = \widetilde{R}_1 R. \]
\end{example}

\subsection{Factorization of a \texorpdfstring{$(2k+1)$}{(2k+1)}-reflection}\label{sec:factor_odd}
Consider a $(2k+1)$-reflection $P$ in a geometric algebra \GR{pqr} of dimension $n=p+q+r$. 
Such an element can be factored into a mutually commuting reflection $r$ and $2k$-reflection $R$, such that 
    \begin{equation}
        P = r R = R r, \label{eq:factor_odd_reflection}
    \end{equation}
where $r = \overline{\expval{P}_1}$. This follows from the Cartan-Dieudonné \cref{th:cd}, by using that $r$ is a valid factor of $P$, and therefore $R := P / r$ is a $2k$-reflection. Because $r$ maps back to the same subspace, since
    \[ r[r] = - r, \]
$R$ is an isometry orthogonal to $r$ by Cartan-Dieudonné.
It follows that the sought-after commuting reflection $r$ and $2k$-reflection $R$ are simply
    \begin{equation}
        r = \overline{\expval{P}_1}, \qquad R = \frac{P}{r}.
    \end{equation}
This completes the factorization of a $(2k+1)$-reflection into a mutually commuting reflection and $2k$-reflection, the latter of which can be decomposed into $k$ bireflections using \cref{sec:factor_even}.

\begin{example}\label{ex:glide_reflection}
\begin{figure*}[ht]
	\noindent
	\centering
	\includegraphics[width=1.0\textwidth]{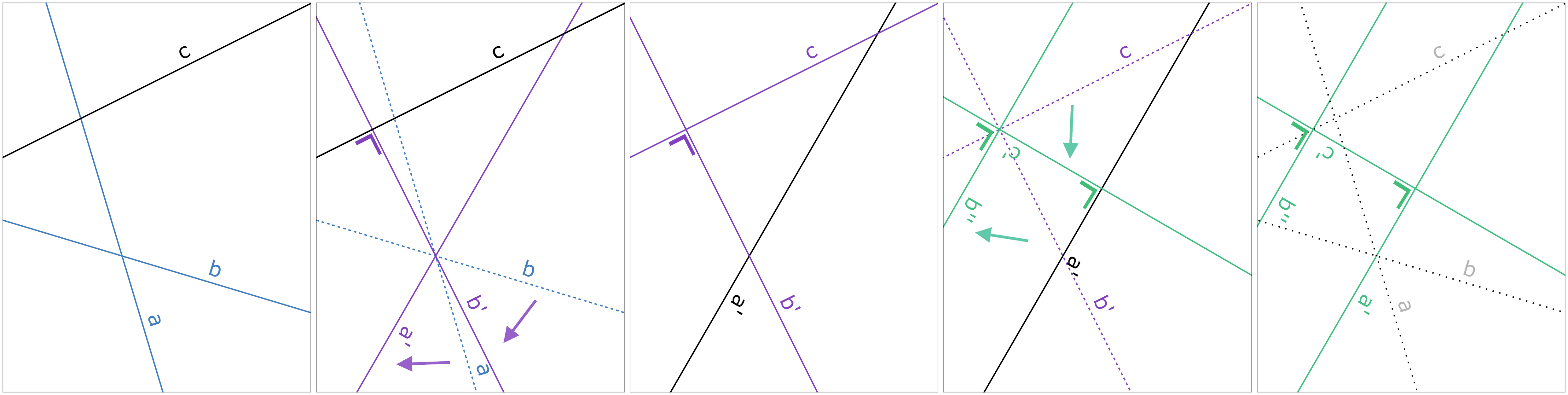}
	\caption{From left to right. (1) An arbitrary trireflection $abc$. (2) Rotation of $ab$ into $a'b'$ so that $b'\perp c$. (3) The composition is now a point reflection $b'c$ followed by a reflection $a'$. (4) Rotate $b'c$ into $b''c'$ so that $c' \perp a'$. (5) The final decomposition of $abc = a'b''c'$ reveals a commuting translation $a'b''$ and orthogonal reflection $c'$. (This example is in fact the 2D equivalent of Mozzi-Chasles' theorem - and the smallest example of the invariant decomposition.)}
	\label{figure_glide}
\end{figure*}
Any trireflection $P = a b c$ can be decomposed into a commuting reflection $r_1$ and bireflection $R = r_2 r_3$. This is depicted in \cref{figure_glide}. The reflection $r_1$ is given by
    \[ r_1 = \overline{\expval{P}_1},\]
while the bireflection is given by
    \[ R = r_2r_3 = \cfrac{P}{ r_1},\]
where the division notation is unambiguous, due to the commutation of $P$ and $r_1$.
In 2DPGA, this has as a consequence that any rotoreflection is fundamentally a commuting reflection and translation, since $\expval{P}_3$ is ideal, and thus $\expval{r_2 r_3}_2$ is ideal, making $r_2 r_3 $ a translation. 

\end{example}

\subsection{Logarithm of a rotor}\label{sec:logarithm}

With the factorization of a $2k$-reflection $R$ into mutually commuting orthogonal bireflections $\{R_1, R_2, \ldots, R_k \}$ as described in \cref{sec:factor_even} in hand, the principal logarithm is simply 
\begin{equation}
	\Ln R = \Ln R_1 + \Ln R_2 + \ldots + \Ln R_k.
\end{equation}
where $\Ln R_i$ is given by \cref{eq:simple_log}:
\begin{align*}
	\Ln{R_i} &= \begin{cases}
		\overline{\expval{R_i}_2} \arccosh\pqty{\expval{R_i}} & \expval{R_i}_2^2 > 0 \\
		\expval{R_i}_2 & \expval{R_i}_2^2 = 0
		\\
        \overline{\expval{R_i}_2} \,  \arccos\pqty{\expval{R_i}} & \expval{R_i}_2^2 < 0
	\end{cases}.
\end{align*}

When $R \to U R \widetilde{U}$ for any $U \in \Pin{p,q,r}$, $\Ln(R) \to U \Ln(R) \widetilde{U}$, as can be computed directly from the series expansion.
Thus, if the invariant decomposition of $\Ln(R)$ is known, the invariant decomposition of $U \Ln(R) \widetilde{U}$ is simply $\{ U \Ln(R_i) \widetilde{U} \}_{i=1}^k$.

\section{Clifford Representation}\label{matrix}

In this section we introduce a novel algorithm for the construction of a $2^n \times 2^n$ real matrix representation for geometric algebras \GR{pqr} of dimension $n = p+q+r$.
Although much has been published on the matrix representations of Clifford algebras, see e.g. \cite{HILE199051,Polchinski:1998rr}, this algorithm has some distinguishing features.
Firstly, it allows fine grained control over basis order, permutation and metric signature. Secondly, this representation models Clifford's geometric product not just as matrix-matrix product, but also as matrix-vector product, 
enabling efficient implementation using linear algebra software. 
Lastly, we demonstrate that this particular representation leads directly to the familiar block-diagonal form of the group action, subsuming the traditional covariant, contravariant, and adjoint representations. along the diagonal. We therefore understand the block diagonal form to be a manifestation of the grade-preserving nature of conjugation.
Owing to these particularly appealing properties, we refer to this representation as the \emph{Clifford representation}.

\subsection{A matrix representation for \texorpdfstring{\GR{pqr}}{Rpqr}}
\label{clifford_rep}
For an $n$ dimensional real Clifford algebra, a $2^n \times 2^n$ real matrix realization can be constructed starting from the following basis matrices:
\begin{equation}\label{basis}
    I = \smqty[1 & 0 \\ 0 & 1]
    ,\, I' = \smqty[1&0\\0&-1]
    ,\, P = \smqty[0 & 1 \\ 1 & 0]
    ,\, Q = \smqty[0 & 1\\-1&0]
    ,\, R = \smqty[0 & 0\\1&0]
\end{equation}
\noindent
Using the $P,Q,R$ matrices, we now assign signature matrices $S_i$ corresponding to the Clifford signature $\mathbb R_{pqr}$:

\begin{equation}\label{signature_matrices}
    S_i = \begin{cases}
      P \qquad e_i^{\,2} = +1 \\
      Q \qquad e_i^{\,2} = -1 \\
      R \qquad e_i^{\,2} = 0 
    \end{cases}
\end{equation}

\noindent
Next, the basis matrices $\mathbf E_i$ corresponding to the basis vectors $\mathbf e_i$ are constructed using the tensor (Kronecker) product for matrices:

\begin{equation*}
    \mathbf E_i = \overbrace{C^i_1 \otimes ... \otimes C^i_{i-1} }^{(i-1)\,\text{times}} \otimes S_i \otimes \overbrace{I' \otimes ... \otimes I'}^{(n-i)\,\text{times}},
\end{equation*}

\noindent
where $C^j_i$ is $I$ if the corresponding $i,j$ basis elements should anti-commute, and $I'$ when they should commute.
For Clifford algebras over the reals, $C^i_j = I$.

Matrix representations for higher grade basisvectors $\mathbf e_{ij\cdots k}$ are now constructed with standard matrix multiplication: 
\begin{align}\label{basispermutation}
    \mathbf E_{ij\cdots k} = \mathbf E_i \mathbf E_j \cdots \mathbf E_k,
\end{align}
with the $2^n$ identity matrix $\mathbf  I_{2^n} = \overbrace{I \otimes ... \otimes I}^{n\, \text{times}}$ representing the scalar unit.
The resulting set of matrices 

\begin{equation}\label{basisorder}
\{ \mathbf R^i \}_{i=1}^{2^n} = \{\mathbf I_{2^n}, \mathbf E_1, \mathbf E_2, \ldots, \mathbf E_{12}, \ldots,\mathbf E_{12\cdots n}\} 
\end{equation}
\noindent 
is closed under matrix multiplication, and indeed a valid representation for the Clifford algebra \GR{pqr}. 
This set is determined up to permutations $\mathbf R^i \to  \mathbf O \mathbf R^i \mathbf O^T$, where $\mathbf O$ is a permutation matrix. Although all $\mathbf O \mathbf R^i \mathbf O^T$ are equivalent representations of \GR{pqr}, there is nonetheless a unique permutation which offers substantial computational advantages over the others.
To construct this \emph{Clifford representation} $\mathbf C^i$, we first construct a permutation matrix $\mathbf O$ with matrix elements
\begin{equation}
  \begin{split}
    \mathbf O_{ij} &= \mathbf R^i_{j0}.
  \end{split}
\end{equation}\noindent
Finally, we construct the Clifford representation $\mathbf C^i$ by conjugating each of the basis matrices $\mathbf R^i$:
\begin{equation}
  \begin{split}
    \mathbf C^i &= \mathbf O \mathbf R^i \mathbf O^T\\
  \end{split}
\end{equation}

The ordering matrix $\mathbf O$ is a permutation matrix that ensures that the only non-zero element of the first column of $\mathbf C^i$ is positive, and in the $i$-th row. The sandwich makes sure the same permutations are being applied to the rows and columns, preserving the group relations.

The resulting set of matrices $\mathbf C^i$ is the unique matrix representation of the Clifford algebra \GR{pqr} where the matrix form of a multivector has the unmodified multivector coefficients as its first column.
As a result, both the matrix-matrix and matrix-vector product represent the geometric product.

Note that this procedure allows us to easily select the metric signature of the basis vectors \cref{signature_matrices}, the permutations of the basis elements \cref{basispermutation}, and the order of all basis blades \cref{basisorder}. 

\subsection{Efficient implementation}

To demonstrate the efficiency of the Clifford representation, consider for example the matrix representation of $\mathbb R_{0,2}$, isomorphic to $SU(2)$, constructed following the procedure above for the
element $x = a + b \e{1} + c \e{2} + d \e{21}$:

\[D(\begin{bmatrix}a \\ b \\ c \\ d\end{bmatrix}) := 
\begin{bmatrix}
a &-b& -c& -d\\
b & a&  d& -c\\
c &-d&  a&  b\\
d & c& -b&  a\\
\end{bmatrix} = a \mathbf I + b \mathbf E_1 + c \mathbf E_2 + d \mathbf E_{21}.
\]
This is of course the well known real representation of the Pauli matrices \cite{bargmann,Weinberg:1995mt}, as expected. Note however, how the first column of this matrix contains, in order, all four coefficients needed to completely determine the matrix. In practice this allows storage of just the $2^n$ vector coefficients, instead of the $2^{2n}$ matrix coefficients. Additionally, this enables an efficient implementation of the geometric product between two multivectors $a,b$ represented in this $2^n$ vector space as $\vec a, \vec b$ :
\[
  \vv{ab} = D(\vec a)\vec b.
\]
Apart from the clear performance improvements, the vector form also allows easy implementation of other geometric algebra features such as grade selection, reversion, conjugation,  etc.:

\[
 {\vec {\tilde x}} := \begin{bmatrix}1&0&0&0\\0&-1&0&0\\0&0&-1&0\\0&0&0&-1\end{bmatrix} \vec {x}, \quad \vec {\overline{x}} := \mqty[\dmat[0]{1,-1,-1,1}] \vec{x},\quad \ldots.
\]
It is worth noting that $D({\vec{\tilde x}}) = D(\vec{x})^T$, while many other operations such as dualisation have no straightforward matrix equivalent.

In this form, transformations of elements by rotors are still executed by conjugation, which can be reformulated to a one sided transformation by solving
\[
D\pqty{D(\vec R)\, \vec x}{\vec{\tilde R}} = A\vec x
\]
for $A$.
In the example below we work this out for the Euclidean group \Eu{3}, recovering the familiar homogeneous representations. 

\subsection{Matrix representations of \Eu{3}}\label{sec:matrix_E3}

A multivector $x \in \GR{3,0,1}$, given by
\begin{equation}\label{eq:x_E3}
    x = s + \vec{v}_o + \vec{v}_\infty + B_o + B_\infty + \cev{t}_\infty + \cev{t}_o + p,    
\end{equation}
where
\begin{alignat*}{3}
s &= x^1,
\qquad
&p &= x^{16} \e{0123}, \\
\vec{v}_o &= x^2\e1 + x^3\e2 + x^4\e3, \qquad
&\vec{v}_\infty &= x^5\e0, \\
B_o &= x^6\e{23} + x^7\e{31} + x^8\e{12}, \qquad
&B_\infty &= x^9\e{01} +x^{10}\e{02} + x^{11}\e{03}, \\
\cev{t}_o &= x^{15}\e{123}, 
\qquad
&\cev{t}_\infty &= x^{12}\e{032} + x^{13}\e{013} + x^{14}\e{021},
\end{alignat*}
with the order and permutation of the basis elements carefully selected (it determines the similarity transformation), has a matrix representation $D(x)$:
\begin{equation*}
    \scriptsize \smqty[
\colora {1} & \colorb {2} & \colorb {3} & \colorb {4} & 0 & -\colorc {6} & -\colorc {7} & -\colorc {8} & 0 & 0 & 0 & 0 & 0 & 0 & -\colord {15} & 0\\[.3em]
\colorb {2} & \colora {1} & \colorc {8} & -\colorc {7} & 0 & -\colord {15} & \colorb {4} & -\colorb {3} & 0 & 0 & 0 & 0 & 0 & 0 & -\colorc {6} & 0\\[.3em]
\colorb {3} & -\colorc {8} & \colora {1} & \colorc {6} & 0 & -\colorb {4} & -\colord {15} & \colorb {2} & 0 & 0 & 0 & 0 & 0 & 0 & -\colorc {7} & 0\\[.3em]
\colorb {4} & \colorc {7} & -\colorc {6} & \colora {1} & 0 & \colorb {3} & -\colorb {2} & -\colord {15} & 0 & 0 & 0 & 0 & 0 & 0 & -\colorc {8} & 0\\[.3em]
\colorb {5} & \colorc {9} & \colorc {10} & \colorc {11} & \colora {1} & \colord {12} & \colord {13} & \colord {14} & -\colorb {2} & -\colorb {3} & -\colorb {4} & \colorc {6} & \colorc {7} & \colorc {8} & -\colore {16} & \colord {15}\\[.3em]
\colorc {6} & \colord {15} & -\colorb {4} & \colorb {3} & 0 & \colora {1} & \colorc {8} & -\colorc {7} & 0 & 0 & 0 & 0 & 0 & 0 & \colorb {2} & 0\\[.3em]
\colorc {7} & \colorb {4} & \colord {15} & -\colorb {2} & 0 & -\colorc {8} & \colora {1} & \colorc {6} & 0 & 0 & 0 & 0 & 0 & 0 & \colorb {3} & 0\\[.3em]
\colorc {8} & -\colorb {3} & \colorb {2} & \colord {15} & 0 & \colorc {7} & -\colorc {6} & \colora {1} & 0 & 0 & 0 & 0 & 0 & 0 & \colorb {4} & 0\\[.3em]
\colorc {9} & \colorb {5} & -\colord {14} & \colord {13} & -\colorb {2} & -\colore {16} & \colorc {11} & -\colorc {10} & \colora {1} & \colorc {8} & -\colorc {7} & -\colord {15} & \colorb {4} & -\colorb {3} & \colord {12} & -\colorc {6}\\[.3em]
\colorc {10} & \colord {14} & \colorb {5} & -\colord {12} & -\colorb {3} & -\colorc {11} & -\colore {16} & \colorc {9} & -\colorc {8} & \colora {1} & \colorc {6} & -\colorb {4} & -\colord {15} & \colorb {2} & \colord {13} & -\colorc {7}\\[.3em]
\colorc {11} & -\colord {13} & \colord {12} & \colorb {5} & -\colorb {4} & \colorc {10} & -\colorc {9} & -\colore {16} & \colorc {7} & -\colorc {6} & \colora {1} & \colorb {3} & -\colorb {2} & -\colord {15} & \colord {14} & -\colorc {8}\\[.3em]
\colord {12} & -\colore {16} & \colorc {11} & -\colorc {10} & -\colorc {6} & -\colorb {5} & \colord {14} & -\colord {13} & \colord {15} & -\colorb {4} & \colorb {3} & \colora {1} & \colorc {8} & -\colorc {7} & -\colorc {9} & \colorb {2}\\[.3em]
\colord {13} & -\colorc {11} & -\colore {16} & \colorc {9} & -\colorc {7} & -\colord {14} & -\colorb {5} & \colord {12} & \colorb {4} & \colord {15} & -\colorb {2} & -\colorc {8} & \colora {1} & \colorc {6} & -\colorc {10} & \colorb {3}\\[.3em]
\colord {14} & \colorc {10} & -\colorc {9} & -\colore {16} & -\colorc {8} & \colord {13} & -\colord {12} & -\colorb {5} & -\colorb {3} & \colorb {2} & \colord {15} & \colorc {7} & -\colorc {6} & \colora {1} & -\colorc {11} & \colorb {4}\\[.3em]
\colord {15} & \colorc {6} & \colorc {7} & \colorc {8} & 0 & \colorb {2} & \colorb {3} & \colorb {4} & 0 & 0 & 0 & 0 & 0 & 0 & \colora {1} & 0\\[.3em]
\colore {16} & -\colord {12} & -\colord {13} & -\colord {14} & -\colord {15} & \colorc {9} & \colorc {10} & \colorc {11} & \colorc {6} & \colorc {7} & \colorc {8} & \colorb {2} & \colorb {3} & \colorb {4} & \colorb {5} & \colora {1}
    ].
\end{equation*}
It is apparent from this matrix that the first column vector $\vec{x}$ lists all the coefficients, and can thus be used as a representation of $x$.

Because a multivector $x \in \GR{3,0,1}$ can be represented by $\vec{x}$,
we would like to represent the conjugation of a multivector $x$ with a normalized rotor $R \in \SE{3}$, $x \mapsto R x \widetilde{R}$, as a matrix-vector product 
\[ D\pqty{D(\vec R) \, \vec x}{\vec{\tilde R}} = A \vec{x}.\]
With this carefully chosen basis, $A$ reveals five familiar matrix representations of $\SE{3}$. 
To construct $A$, we start with the rotor $R$:
\[
R = a + b\e{23} + c\e{31} + d\e{12} + e\e{01} + f\e{02} + g\e{03} + h\e{0123},
\]
and a general multivector $x$ as defined in \cref{eq:x_E3}:
\[
x = s + \vec{v}_o + \vec{v}_\infty + B_o + B_\infty + \cev{t}_\infty + \cev{t}_o + p.
\]
By symbolically solving the linear system $A  \vec{x} = D\pqty{D(\vec{R}) \, \vec{x}} \tilde{\vec{R}}$ for $A$, we find that the matrix representation of conjugation is given by the matrix-vector product
\begin{equation}\label{eq:matrix_E3}
    A \vec{x} =  \mqty[\dmat{
    1, 
    \textcolor{RedOrange}{\mathbf{R}} & \textcolor{RedOrange}{\mathbf{0}} \\ \textcolor{RedOrange}{\mathbf{s}^\top \mathbf R} & \textcolor{RedOrange}{1}, 
    \textcolor{MidnightBlue}{\mathbf R} & \textcolor{MidnightBlue}{\mathbf{0}} \\ \textcolor{MidnightBlue}{\mathbf T \mathbf R} & \textcolor{MidnightBlue}{\mathbf R}, 
    \textcolor{Fuchsia}{\mathbf R} & \textcolor{Fuchsia}{\mathbf{t}} \\ \textcolor{Fuchsia}{\mathbf{0}} & \textcolor{Fuchsia}{1}, 
    1
    }] 
\mqty[
s\\ \vec{v}_o \\ \vec{v}_\infty \\
B_o \\ B_\infty \\ \cev{t}_\infty \\ \cev{t}_o \\ p 
],
\end{equation}
where
\begin{align*}
    \mathbf R &= \smqty[{
    a^2+b^2-c^2-d^2 & 2(ad+bc) & 2(-ac+bd) \\
    2(-ad+bc) & a^2-b^2+c^2-d^2 & 2(ab+cd) \\
    2(ac+bd) & 2(-ab+cd) & a^2-b^2-c^2+d^2 
    }], \\
    \mathbf T&=\smqty[{
    0 & t_3 & -t_2 \\
    -t_3 & 0 & t_1 \\
    t_2 & -t_1 & 0
    }],
    \, \mathbf{t} = 2\smqty[{
    cg-ae-bh-df \\ de-af-bg-ch \\ bf-ag-ce-dh
    }],
    \, \mathbf{s} = 2\smqty[{
    cg+ae+bh-df \\ de+af-bg+ch \\ bf+ag-ce+dh
    }].
\end{align*}
The matrix $A$ is block diagonal, corresponding to the fact that conjugation is a grade preserving operation. The block matrices on the diagonal are $1\times 1, 4\times 4, 6\times 6, 4\times 4$ and $1\times 1$ dimensional, and are all homogeneous matrix representations of $\SE3$ acting on invariant vectorspaces of the corresponding dimension: $s$, $\vec{v} = \vec{v}_o + \vec{v}_\infty$, $B = B_o + B_\infty$,  $\cev{t} = \cev{t}_o + \cev{t}_\infty$, and $p$ respectively. They are given by
\begin{alignat*}{3}
    D^{1}(R) &= D^{5}(R) = 1, \;
    & D^{2}(R) &= \mqty[\textcolor{RedOrange}{\mathbf{R}} & \textcolor{RedOrange}{\mathbf{0}} \\ \textcolor{RedOrange}{\mathbf{t}^\top \mathbf R} & \textcolor{RedOrange}{1}], \\
    D^{3}(R) &= \mqty[\textcolor{MidnightBlue}{\mathbf R} & \textcolor{MidnightBlue}{\mathbf{0}} \\ \textcolor{MidnightBlue}{\mathbf T \mathbf R} & \textcolor{MidnightBlue}{\mathbf R}], \;
    & D^{4}(R) &= 
    \mqty[\textcolor{Fuchsia}{\mathbf R} & \textcolor{Fuchsia}{\mathbf{t}} \\ \textcolor{Fuchsia}{\mathbf{0}} & \textcolor{Fuchsia}{1}].
\end{alignat*}
Firstly, the scalar and pseudoscalar transform under the trivial representation.
Secondly, the matrix $D^4(R)$ is the familiar $4 \times 4$ covariant homogeneous representation of the Euclidean group acting on a 4D vector space of homogeneous points:
\begin{equation*}
    \mqty[\cev{t}_\infty' \\ \cev{t}_o'] = \mqty[\textcolor{Fuchsia}{\mathbf R} & \textcolor{Fuchsia}{\mathbf{t}} \\ \textcolor{Fuchsia}{\mathbf{0}} & \textcolor{Fuchsia}{1}] \mqty[\cev{t}_\infty \\ \cev{t}_o].
\end{equation*}
Thirdly, the matrix $D^2(R)$ is the $4 \times 4$ contravariant homogeneous representation acting on a 4D vector space of homogeneous planes:
\begin{equation*}
    \mqty[\vec{v}_o' \\ \vec{v}_\infty'] = \mqty[\textcolor{RedOrange}{\mathbf{R}} & \textcolor{RedOrange}{\mathbf{0}} \\ \textcolor{RedOrange}{\mathbf{t}^\top \mathbf R} & \textcolor{RedOrange}{1}] \mqty[\vec{v}_o \\ \vec{v}_\infty].
\end{equation*}
Lastly, the matrix $D^3(R)$ is the $6 \times 6$ adjoint representation acting on a 6D vector space of Pl\"ucker line coordinates \cite{Selig2006Robotics}:
\begin{equation*}
    \mqty[B_o' \\ B_\infty'] = \mqty[\textcolor{MidnightBlue}{\mathbf R} & \textcolor{MidnightBlue}{\mathbf{0}} \\ \textcolor{MidnightBlue}{\mathbf T \mathbf R} & \textcolor{MidnightBlue}{\mathbf R}] \mqty[B_o \\ B_\infty].
\end{equation*}
These well known representations, represent the group action of $R$ on the corresponding invariant vectorspaces, which are traditionally identified as the elements of geometry.
The matrix $4 \times 4$ representations to transform points and planes already require $16$ real numbers: the same as the entire multivector $x$.
Additionally, how would we intersect a $6$ dimensional line with a $4$ dimensional plane? In the matrix formulation this is a difficult question to answer, whereas using the multivector approach the answer is simply $B \wedge \cev{t}$. The GA formulation therefore offers both conceptual clarity and computational advantages over the matrix formulation.

\section{Conclusion}

\epigraph{
\begin{tabular}{p{0.45\textwidth}|p{0.45\textwidth}}
    L'application des mêmes idées de dualité peut s'étendre à la Mécanique. En effet, l'élément primitif des corps auquel on applique d'abord les premiers principes de cette science, est, comme dans la Géométrie ancienne, le point mathématique. 
Ne sommes-nous pas autorisés à penser, maintenant, qu'en prenant le plan pour l'élément de l'étendue, et non plus le point, on sera conduit à d'autres doctrines, faisant pour ainsi dire une nouvelle science? 
    & The application of the same ideas of duality can be extended to Mechanics. Indeed, the primitive element of bodies to which the first principles of this science are applied is, as in ancient Geometry, the mathematical point. Are we not permitted to think, now, that by taking the plane for the basic element, and no longer the point, we shall be led to other doctrines, making, as it were, a new science?
\end{tabular}
}{Michel Chasles (1875, \cite{chasles1875})}
Based on the results of this paper, we feel emboldened when we answer our confr\`ere: not only is it permitted to take planes as the basis elements, it is extremely fruitful. Only when vectors are associated with hyperplanes, can we develop intuitions about transformations and elements of geometry which carry over to spaces of any metric and number of dimensions, intuitions which in turn have been crucial to the development of the invariant decomposition.

A plane based approach has been taken by geometers since time immemorial, but only in recent work by amongst others Jon Selig \cite{Selig2000CliffordAO} and Charles Gunn \cite{GunnThesis} has it been recognized that planes map elegantly onto vectors in \GR{3,0,1}, i.e. 3DPGA.
The current work demonstrates that this approach works more generally: vectors can be associated with reflections and/in hyperplanes for pseudo-euclidean groups, or inversions and/in  hyperspheres for the conformal group.

Hyperplanes intersect to form geometric elements with decreasing degrees of freedom, such as hyperlines, hyperpoints, etc. This behavior is mirrored by the outer product of vectors, which wedges vectors into bivectors, trivectors, etc., which makes their identification natural.

The composition of an even number of reflections led naturally to \Spin{p,q,r} elements, which are generated by bivector elements of the Lie algebra \spin{p,q,r}. 
Because the geometric product eliminates any identical reflections in the $2k$-reflection, it necessarily follows that the resulting rotor is always decomposable into at most $k$ mutually commuting orthogonal bireflections. 
The complementary statement in the Lie algebra \spin{p,q,r}, is that the generating bivector can be split into $k$ mutually commuting orthogonal simple bivectors. 
Additionally, the composition of $2k+1$ reflections can be decomposed into a commuting reflection and  $2k$-reflection, which can then be decomposed further into $k$ commuting bireflections.

This insight led to the invariant decomposition: a novel algorithm to either split a bivector into mutually commuting orthogonal simple bivectors (\cref{sec:invariant_decomposition}),
or to split a rotor into mutually commuting orthogonal simple rotors (\cref{sec:factor_even}). 

Because simple bivectors square to scalars, simple rotors, as the exponential of a simple bivector, follow a generalized Euler's formula. Therefore, after the decomposition has been performed, the exponential and logarithmic functions are no more complicated than those of complex analysis (\cref{sec:exponential,sec:logarithm} respectively).

The famous Mozzi-Chasles theorem is now understood to be a special case of the invariant decomposition, and it appears that Michel Chasles' suggestion that perhaps vectors should have been identified with planes, not points, does indeed offer significant advantages.

\begin{acks}
The authors would like to thank Dr. Ir. Leo Dorst for invaluable discussions about this research.
The research of M.~R. was supported by 
\grantsponsor{1}{KU Leuven}{} IF project \grantnum[]{1}{C14/16/067}.
\end{acks}

\bibliographystyle{ACM-Reference-Format}
\bibliography{biblio.bib}

\newpage

\begin{landscape}
\includepdf[landscape,fitpaper,templatesize={11.7in}{8.27in}]{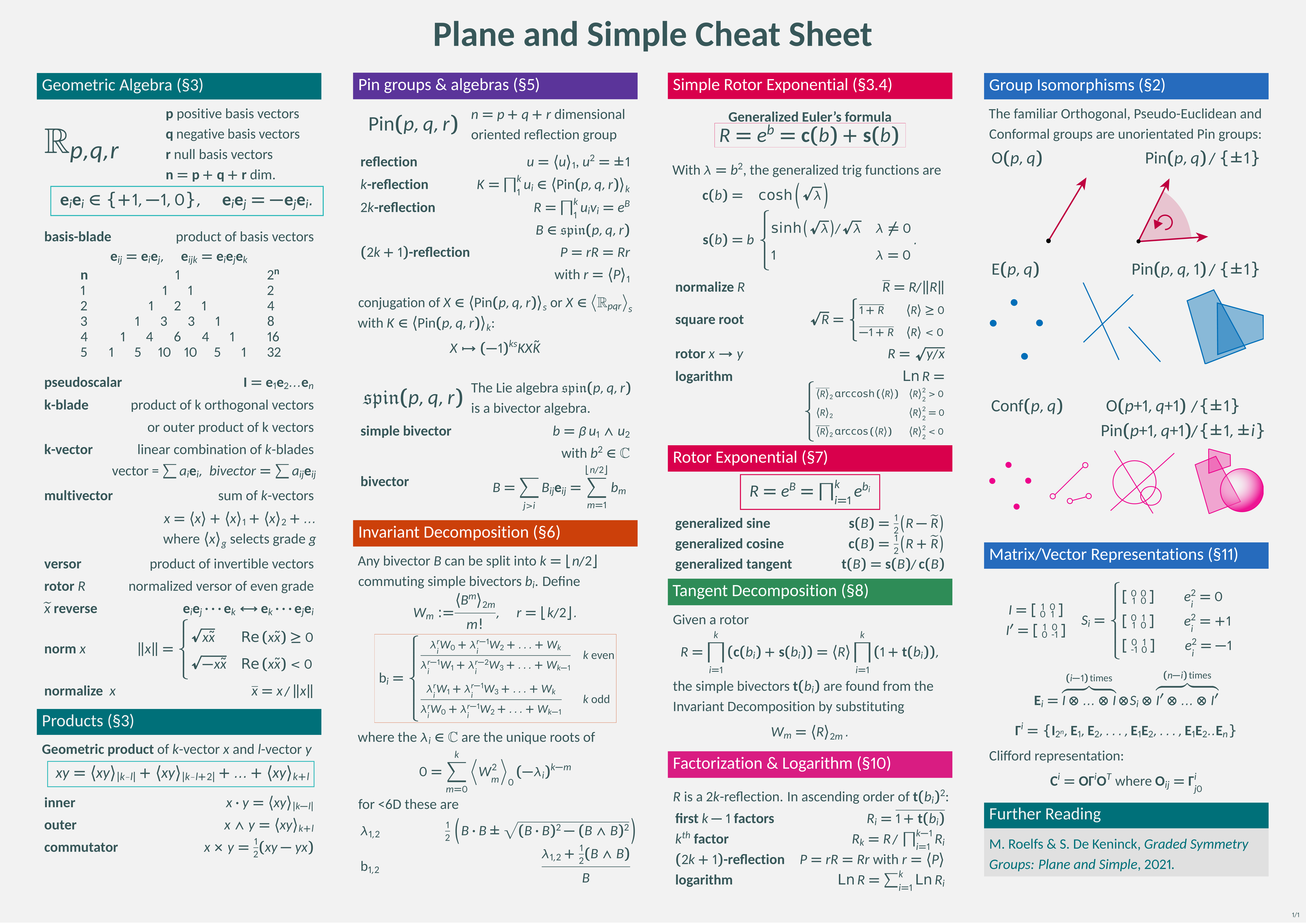}
\end{landscape}

\end{document}